 \theoremstyle{plain}
 \newtheorem{thm}{Theorem}
 \newtheorem{lem}{Lemma}
 \newtheorem{cor}{Corollary}
 \theoremstyle{definition}
 \newtheorem{defn}{Definition}
 \newtheorem{rem}{Remark}
 \newtheorem{assm}{Assumption}
\newcommand{\bb}[0]{\begin{bmatrix}}
\newcommand{\eb}[0]{\end{bmatrix}}
\newcommand{\be}[0]{\begin{equation}}
\newcommand{\ee}[0]{\end{equation}}
\newcommand{\ben}[0]{\begin{equation*}}
\newcommand{\een}[0]{\end{equation*}}
\newcommand{\norms}[1]{\lVert#1\rVert}
\let\norm=\normB
\renewcommand{\Re}[0]{\mathbb R}
\newcommand{\Ce}[0]{\mathbb C}
\title{Adaptive Output Feedback based on Closed-loop Reference Models}
\author{Travis~E.~Gibson, Zheng Qu, Anuradha M. Annaswamy and Eugene Lavretsky
\thanks{T.~E. Gibson is with  Harvard Medical School and the Channing Division of Network Medicine, Department of Medicine, Brigham and Women's Hospital, Boston MA 02115 email:({travis.gibson@channing.harvard.edu}). Previously with the Department of Mechanical Engineering, Massachusetts Institute of Technology, Cambridge MA 02139 e-mail: ({tgibson@mit.edu})}
\thanks{ Z. Qu and A. M. Annaswamy are with the Department
of Mechanical Engineering, Massachusetts Institute of Technology, Cambridge,
MA, 02139.}
\thanks{E.~Lavretsky is with the Boeing Company, Huntington Beach CA, 92648.}}
\begin{document}

\maketitle

\begin{abstract}
This note presents the design and analysis of an adaptive controller for a class of linear plants in the 
presence of output feedback. This controller makes use of a closed-loop reference model as an observer, and guarantees global stability and asymptotic output tracking. 
\end{abstract}

\section{Introduction}
While adaptive control has been studied since the 60's, the evolution of its use in real systems and the extent to which we fully understand its behavior has only been elucidated within the last decade. Stability of adaptive control systems came only in the 70's, with robustness and extensions to nonlinear systems coming in the 80's and 90's, respectively \cite{annbook,ioabook,kkkbook}. Recent directions in adaptive control pertain to guaranteed transient properties by using  a closed-loop architecture   for reference models \cite{gib13access,gib13acc1,gib13ecc,gib13acc2,eug10aiaa,lav12tac,lavbook,gibson_phd}. In this paper, we focus on linear  {\em Multi Input Multi Output} (MIMO) adaptive systems with partial state-feedback where we show that such closed-loop reference models can lead to a separation principle based adaptive controller which is simpler to implement compared to the classical ones in \cite{annbook,ioabook,kkkbook}. The simplification comes via the use of reference model states in the construction of the regressor, and not the classic approach where the regressor is constructed from filtered plant inputs and outputs.

%

In general, the separation principle does not exist for nonlinear systems and few authors have analyzed it. Relevant work on the separation principle in adaptive control can be found in \cite{kha96,ata01}. The structures presented in \cite{kha96,ata01} are very generic, and as such, no global stability results are reported in this literature. Also, due to the generic nature of the results it is a priori assumed (or enforced through a saturation function) that the control input and adaptive update law are globally bounded functions with respect to the plant state \cite[Assumption 1.2]{ata01}. No such assumptions are needed in this work and the stability results are global.

The class of MIMO linear plants that we address in this paper satisfy two main assumptions. The first is that the number of outputs is greater than or equal to the number of inputs, and the second is that the first Markov Parameter has full column rank. The latter is equivalent to a relative degree unity condition in the {\em Single Input Single Output} (SISO) case. In addition to these two assumptions,  the commonly present assumption of stable transmission zeros is needed here as well. With these assumptions, an output feedback adaptive controller is designed that can guarantee stability and asymptotic tracking of the reference output. Unlike \cite{kha96,ata01}, no saturation is needed, and unlike \cite{eug10aiaa,lav12tac,lavbook}  asymptotic convergence of the tracking error to zero is proved for finite observer gains. Preliminary results on the control scheme presented in this work can be found in \cite{qu13aiaa}. An alternate approach using a linear matrix inequality was developed in \cite{dpw2015} and is successfully applied to a hypersonic vehicle model. An analytical approach was developed in \cite{max15} to handle a specific class of nonlinear uncertainties and achieves asymptotic convergence of the tracking error to zero with finite observer gains, and is shown to be applicable for a class of flexible aircraft platforms.

The paper is organized as follows. Section II states the control problem along with our assumptions. Section III proves stability for SISO and square MIMO systems. Section IV analyzes the use of an optimal observer in the design of the closed loop reference model as well as a methodology for extending the design to non-square MIMO systems. Section V contains a simulation example based on the longitudinal dynamics of an aircraft.  Conclusions are presented in Section VI.

\subsubsection*{Notation}
The 2-norm for vectors and the induced 2-norm for matrices is denoted as $\norm \cdot$. The differential operator is defined as $s= d/dt$ throughout. For a real matrix $A$, the notation $A^T$ is the matrix transpose. We use $I$ to denote the identity matrix. Big $O$-notation in terms of $\nu$ is presented as $O(\nu)$ and unless otherwise stated it is assumed that this holds for $\nu$ positive and sufficiently small. The definition of  {\em Strict Positive Real} (SPR), the {\em Kalman-Yacubovich-Popov} (KYP) Lemma, and the definition of transmission zero are given in Appendix \ref{app:lin}.


%

\section{Control Problem}

The class of plants to be  addressed in this paper is
\be\label{mod:eq:sys1}
\dot x = Ax+ B\Lambda u,\quad \quad y = C^T x
\ee
where $x\in \Re^n$, $u\in \Re^m$, and $y\in \Re^m$. $A$ and $\Lambda$ are unknown, but $B$ and $C$ are assumed to be known, and only  $y$ is assumed to be available for measurement. The goal is to design a control input $u$ so that $x$ tracks the closed-loop reference model state $x_m$ 
\be\label{mod:eq:ref}
\dot x_m = A_mx_m+ B r - L (y-y_m),\quad \quad y_m = C^T x_m
\ee
where $r\in \Re^m$ is the reference input and  and $L$ is a feedback gain that will be designed suitably. The reader is referred to references \cite{gib13acc1,gib13acc2,gib13ecc,gib13tranA,gib13access} for its motivation.


The following assumptions are made throughout.
\begin{assm}\label{mo:as:1}
The product $C^TB$ is full rank.
\end{assm}
\begin{assm}\label{mo:as:2}
The pair $\{A_m, C^T\}$ is observable.
\end{assm}
\begin{assm}\label{mo:as:3}
The system in \eqref{mod:eq:sys1} is minimum phase.\footnote{A MIMO system is  minimum phase if all of its transmission zeros are in the strict left half of the complex plane.}
\end{assm}
\begin{assm}\label{mo:as:4}
There exists a $\Theta^*\in \Re^{n\times m}$ such that $A+B\Lambda \Theta^{*T}= A_m$ and $K^*\in \Re^{m\times m}$ such that $\Lambda K^{*T} = I$.
\end{assm}
\begin{assm}\label{mo:as:5}
 $\Lambda$ is diagonal with positive elements.
\end{assm}
\begin{assm}\label{mo:as:6} The uncertain matching parameter $\Theta^*$, and the input uncertainty matrix $\Lambda$ have a priori known upper bounds
\be\label{kb}
\bar \theta^* \triangleq \sup \norm{\Theta^*}  \text{ and }  \bar \lambda \triangleq  \sup \norm{\Lambda}.
\ee
\end{assm}

\noindent Assumption \ref{mo:as:1} corresponds to one of the main assumptions mentioned in the introduction, and that is that the first Markov Parameter is nonsingular. The system in \eqref{mod:eq:sys1} is square and therefore the other main assumption mentioned in the introduction is implicitly satisfied. The extension to non-square systems is presented later in the text. Assumption 2 is necessary as our result requires the use of an observer like gain in the reference model, notice the $L$ in \eqref{mod:eq:ref}. Assumption 3 is common in adaptive systems as the KYP Lemma does not hold for plants with a right half plane transmission zero. 

Assumptions \ref{mo:as:4} and \ref{mo:as:5} imply that the pair $\{A,B\}$ is controllable, and are such that a matching condition is satisfied.  Such an assumption is commonly made in plants where states are accessible \cite{annbook}, but is introduced in this problem when only certain outputs are accessible. One application area where such an assumption is routinely satisfied is in the area of aircraft control \cite{lavbook}. Extensions of Assumption \ref{mo:as:4} to the case when the underlying regressor vector is globally Lipschitz are possible as well \cite{lavbook}. Assumption 5 can be relaxed to $\Lambda$ symmetric and full rank. 
Assumption \ref{mo:as:6} facilitates an appropriate choice of $L$.
The specifics of the control design are now addressed.

For the plant in \eqref{mod:eq:sys1} and \eqref{mod:eq:ref} satisfying the six assumptions above, we propose the following adaptive controller:
\be\label{eq:cin}
u=\Theta^T(t)x_m + K^T(t) r
\ee
\be\label{mod:eq:update0}
\begin{split}
\dot\Theta & =  - \Gamma_\theta x_m e_y^T  M 
\\
\dot K & =  -\Gamma_k r e_y^T M
\end{split}
\ee
where 
\be\label{M}
M \triangleq C^TB, \ee  $e_y=y-y_m$ and $\Gamma_\theta,\Gamma_k$ are both positive diagonal free design matrices. The matrix $M$ is referred to as the {\em mixing matrix} throughout.

The reason for the choice of the control input in \eqref{eq:cin} is simply because $x$ is not available for measurement, and the reference model state $x_m$ serves as an observer-state. Historically, the use of such an observer has always proved to be quite difficult, as the non-availability of the state proves to be a significant obstacle in determining a stable adaptive law. In the following, it is shown that these obstacles can be overcome for the specific class of multivariable plants that satisfy Assumptions 1 through \ref{mo:as:6}. 

From \eqref{mod:eq:sys1}, \eqref{mod:eq:ref}, and \eqref{eq:cin}, it is easy to show that the state error $e=x-x_m$ satisfies the dynamics
\be\begin{split}\label{mod:eq:e} \dot e &= (A_m+LC^T) e + B\Lambda ( \tilde\Theta^T x_m +  \tilde K^T r - \Theta^{*T} e) \\ e_y&=C^Te\end{split} \ee The structure of \eqref{mod:eq:e} and the adaptive laws suggest the use of the following Lyapunov function:
\be\label{disc:v00}
V=e^TPe + \text{Tr} (\Lambda \tilde\Theta^T \Gamma_\theta^{-1} \tilde\Theta) + \text{Tr} (\Lambda \tilde K^T \Gamma_k^{-1} \tilde K ) 
\ee
where for now it is assumed that $P=P^T>0$ satisfies the following equation
\be\begin{split}\label{disc:p00}
(A_m+LC^T)^TP+P(A_m+LC^T)&=-Q \\ PB&=CM
\end{split}\ee
where $Q=Q^T>0$.
Taking the derivative of \eqref{disc:v00} and using \eqref{mod:eq:update0}, \eqref{mod:eq:e}, and \eqref{disc:p00} it can be shown that
\be\label{star}
\dot V = -e^TQe + 2e^TPB \Lambda \Theta^{*T} e.
\ee
Establishing sign-definiteness of $\dot V$ is therefore non-trivial as the size of the sign-indefinite term in \eqref{star} is directly proportional to the parametric uncertainty $\Theta^*$, and $P$ and $Q$ are necessarily correlated by \eqref{disc:p00}. In what follows, we will show how $L$ and $M$ can be chosen such that a $P$ and $Q$ satisfying \eqref{disc:p00} exist and furthermore, $\lim_{t\to\infty} e_y(t) =0$. It will be shown that stability for the above adaptive system can only be insured if $Q>0$ is sufficiently weighted along the $CC^T$ direction.


\section{Stability Analysis}

\subsection{Stability in the SISO Case}\label{mod:sec:siso}
The choice of $L$ is determined in two steps. First, an observer gain $L_s$ and mixing matrix $M$ are selected so that the transfer function $M^TC^T(sI-A-L_sC^T)B$ is {\em Strict Positive Real} (SPR).\footnote{$M$ is denoted the mixing matrix, as it mixes the outputs of $C^T(sI-A-L_sC^T)B$ so as to achieve strict positive realness.} Then the full observer gain $L$ is defined.

\begin{lem}\label{mod:lem:l1}
For  a SISO ($m=1$) system in \eqref{mod:eq:sys1} satisfying Assumptions \ref{mo:as:1}--\ref{mo:as:3}  there exists an $L_s$ such that
\be\label{min1}
C^T(sI-A_m-L_sC^T)^{-1}B = \frac{a}{s+\rho}
\ee
where $\rho>0$ is arbitrary and $a = C^TB$.
\end{lem}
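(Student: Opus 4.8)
The plan is to exploit that output injection $A_m\mapsto A_m+L_sC^T$ relocates the poles of the transfer function while leaving its numerator fixed, and then to place those poles, using observability, so that the numerator cancels and only a first-order transfer function remains. First I would factor
\be\label{Gfac}
C^T(sI-A_m)^{-1}B=\frac{N(s)}{D(s)},\qquad D(s)=\det(sI-A_m),
\ee
and record the degree structure. Assumption \ref{mo:as:1} gives $C^TB=a\neq 0$, so the system has relative degree one; hence $\deg D=n$, $\deg N=n-1$, and the leading coefficient of $N$ equals $a$. The roots of $N$ are the transmission zeros of $(A_m,B,C^T)$. Since transmission zeros are invariant under state feedback and $A_m=A+B\Lambda\Theta^{*T}$ by the matching condition, they coincide with those of the plant, which lie in the open left half plane by Assumption \ref{mo:as:3}; thus $N(s)$ is Hurwitz.

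Next I would establish the key structural identity. As $L_sC^T$ is rank one in the SISO case, the Sherman--Morrison formula yields
\be\label{injfac}
C^T(sI-A_m-L_sC^T)^{-1}B=\frac{N(s)}{D(s)-N_L(s)},
\ee
where $N_L(s)=C^T\,\mathrm{adj}(sI-A_m)\,L_s$ has degree at most $n-1$ and depends linearly on $L_s$; equivalently $\det(sI-A_m-L_sC^T)=D(s)-N_L(s)$. The numerator $N(s)$ is therefore untouched and only the characteristic polynomial changes. By Assumption \ref{mo:as:2} the pair $\{A_m,C^T\}$ is observable, so the dual pole-placement result for the controllable pair $\{A_m^T,C\}$ lets me assign $D(s)-N_L(s)$ to be any monic polynomial of degree $n$. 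Choosing
\ben
D(s)-N_L(s)=\frac{(s+\rho)}{a}\,N(s),
\een
which is monic of degree $n$ because $N$ has leading coefficient $a$, and substituting into \eqref{injfac} gives $C^T(sI-A_m-L_sC^T)^{-1}B=a/(s+\rho)$, as claimed.

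The step I expect to be the main obstacle is not the algebra of \eqref{injfac} but the legitimacy of the pole-zero cancellation it exploits: the construction assigns $n-1$ of the closed-loop eigenvalues of $A_m+L_sC^T$ to the roots of $N(s)$, and it is Assumption \ref{mo:as:3} that guarantees these cancelled modes are stable, so that $A_m+L_sC^T$ is Hurwitz and $a/(s+\rho)$ is a genuine stable realization rather than one hiding an unstable mode. This is precisely why the minimum-phase hypothesis is indispensable here. Finally, $\rho>0$ may be chosen freely because the pole at $-\rho$ is placed independently of the fixed zeros.
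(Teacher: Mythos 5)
Your proposal is correct and takes essentially the same route as the paper: establish relative degree one from $C^TB\neq 0$, note the zeros are invariant under the feedback defining $A_m$ and hence stable by the minimum-phase assumption, and use observability to place $n-1$ eigenvalues of $A_m+L_sC^T$ at those zeros and the remaining one at $-\rho$. The paper states this in three sentences without the Sherman--Morrison/matrix-determinant-lemma identity; your version simply makes explicit why output injection leaves the numerator $N(s)$ fixed and why the cancelled modes are stable, which is a welcome filling-in of detail rather than a different argument.
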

\begin{proof}
Given that $C^TB$ is non-zero $C^T(sI-A_m-L_sC^T)^{-1}B$ is a relative degree one transfer function. In order to see this fact, consider a system in control canonical form, and compute the coefficient for $s^{n-1}$ in the numerator. By Assumption 2, all zeros of the transfer function $C^T(sI-A)^{-1}B$ are stable, and since zeros are invariant under feedback, $C^T(sI-A_m)^{-1}B$ is minimum phase as well. Assumption 2 implies that the eigenvalues of $A_m+L_sC^T$ can be chosen arbitrarily. Therefore, one can place $n-1$ of the eigenvalues of $A_m+L_sC^T$ at the $n-1$ zeros of $C^T(sI-A_m)^{-1}B$ and its $n$-th eigenvalue clearly at $-\rho$. 
\end{proof}


The choice of $L_s$ in Lemma \ref{mod:lem:l1} results in a relative degree one transfer function with a single pole not canceling the zeros. This system however need not be SPR as $a$ may be negative; however $\frac{a^2}{s+\rho}$ is SPR and thus the following Corollary holds.

\begin{cor}\label{corM}
If $L_s$ is chosen as in \eqref{min1} and $M$ selected as in \eqref{M},
the SISO transfer function $M^TC^T(sI-A_m-L_sC^T)^{-1}B$  is SPR. Therefore, there exists $P=P^T>0$ and $Q_s=Q_s^T>0$ such that
\be\begin{split}\label{eq:ly2}
(A_m+L_sC^T)^TP+P(A_m+L_sC^T)&=-Q_s \\ PB&=CM.
\end{split}\ee
\end{cor}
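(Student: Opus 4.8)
The plan is to reduce the matrix assertion to a scalar transfer-function computation and then call on the KYP Lemma. Since $m=1$, the mixing matrix $M=C^TB=a$ is a scalar, so $M^T=a$. Multiplying the identity of Lemma \ref{mod:lem:l1} on the left by $M^T$ gives
\[
M^TC^T(sI-A_m-L_sC^T)^{-1}B = a\cdot\frac{a}{s+\rho}=\frac{a^2}{s+\rho}.
\]
Thus the loop transfer function collapses to a first-order transfer function with strictly positive numerator $a^2$ (nonzero by Assumption \ref{mo:as:1}) and a single stable pole at $-\rho$, which is precisely why the sign ambiguity of $a$ flagged before the corollary disappears.

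First I would certify that $a^2/(s+\rho)$ is SPR, which is immediate: the pole $-\rho$ lies in the open left half plane, $\operatorname{Re}\frac{a^2}{j\omega+\rho}=\frac{a^2\rho}{\rho^2+\omega^2}>0$ for every real $\omega$, and since the relative degree is one the high-frequency condition $\lim_{\omega\to\infty}\omega^2\operatorname{Re}\frac{a^2}{j\omega+\rho}=a^2\rho>0$ holds. These are exactly the standard SPR conditions, so the first assertion of the corollary follows with essentially no work.

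It then remains to pass from SPR to the pair $(P,Q_s)$. A state-space realization of the SPR transfer function is the triple $(A_m+L_sC^T,\,B,\,CM)$, and the KYP equations for this triple are precisely \eqref{eq:ly2}, with $PB=CM$ serving as the output-matching condition. The hard part will be that this realization is \emph{not} minimal: Lemma \ref{mod:lem:l1} places $n-1$ eigenvalues of $A_m+L_sC^T$ on the $n-1$ transmission zeros, so those modes cancel in the transfer function and are hidden (uncontrollable or unobservable). The textbook KYP Lemma presumes a minimal realization in order to secure $P>0$, hence does not apply verbatim. I would resolve this by noting that, by the minimum-phase Assumption \ref{mo:as:3}, every cancelled mode is stable and, together with the pole at $-\rho$, makes $A_m+L_sC^T$ Hurwitz; the realization is therefore stabilizable and detectable, and the form of the KYP/Meyer-Kalman-Yakubovich lemma valid for such realizations still yields $P=P^T>0$ and $Q_s=Q_s^T>0$ solving \eqref{eq:ly2}. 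Equivalently, I could solve the KYP equations on the minimal one-dimensional subsystem and extend $P$ across the stable hidden dynamics, the stability of those modes being exactly what keeps $P$ positive definite. I would make sure the appendix version of the KYP Lemma is quoted in a form requiring only stabilizability and detectability rather than full minimality, since that is the crux of the argument.
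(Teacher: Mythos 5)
Your proposal matches the paper's (essentially one-line) argument: the paper simply notes that multiplying $a/(s+\rho)$ from Lemma \ref{mod:lem:l1} by $M^T=a$ yields $a^2/(s+\rho)$, which is SPR regardless of the sign of $a$, and then appeals to the KYP Lemma to produce $P$ and $Q_s$ in \eqref{eq:ly2}. Your explicit check of the SPR conditions is fine, and your observation that the realization $(A_m+L_sC^T,\,B,\,CM)$ is non-minimal---the $n-1$ eigenvalues placed at the transmission zeros are hidden modes, so the appendix's KYP Lemma (stated for minimal realizations) does not apply verbatim---is a genuine subtlety the paper glosses over; your proposed fix via a stabilizable/detectable (Meyer--Kalman--Yakubovich type) version of the lemma, using the fact that the cancelled modes are the stable transmission zeros guaranteed by Assumption \ref{mo:as:3}, is the right way to close that gap.
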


\begin{lem}\label{lem:proof1}
Choosing $L=L_s-\rho BM^T$ where $L_s$ is defined in Lemma \ref{mod:lem:l1} and $\rho>0$ is arbitrary, the transfer function $M^TC^T(sI-A_m-LC^T)^{-1}B$ is SPR and satisfies:
\be
\begin{split}
\label{eq:ly2p}
(A_m+LC^T)^TP+P(A_m+LC^T)  = -Q \\
 Q   \triangleq Q_s  + 2\rho  C MM^TC^T 
\end{split}
\ee
where $P$ and $Q_s$ are defined in \eqref{eq:ly2} and $M$ is defined in \eqref{M}.
\end{lem}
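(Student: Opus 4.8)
The plan is to keep the pair $(P,Q_s)$ delivered by Corollary \ref{corM} fixed and to show that this same $P$ still solves a Lyapunov equation after $L_s$ is replaced by $L$, only with a modified right-hand side. First I would substitute $L=L_s-\rho BM^T$ into the closed-loop matrix, giving $A_m+LC^T=(A_m+L_sC^T)-\rho BM^TC^T$, so that the symmetric form splits as
\[(A_m+LC^T)^TP+P(A_m+LC^T)=\big[(A_m+L_sC^T)^TP+P(A_m+L_sC^T)\big]-\rho\big(CMB^TP+PBM^TC^T\big).\]
The bracketed term is exactly $-Q_s$ by the first equation of \eqref{eq:ly2}, so everything reduces to identifying the cross term.

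The crucial observation is that the coupling condition $PB=CM$ from \eqref{eq:ly2} also yields $B^TP=M^TC^T$ by symmetry of $P$. Substituting both of these into $\rho(CMB^TP+PBM^TC^T)$ collapses each summand to $\rho\,CMM^TC^T$, producing $2\rho\,CMM^TC^T$ in total, and hence
\[(A_m+LC^T)^TP+P(A_m+LC^T)=-Q_s-2\rho\,CMM^TC^T=-Q,\]
which is the claimed identity \eqref{eq:ly2p}. This one-line cancellation, driven entirely by $PB=CM$, is the heart of the argument; everything else is bookkeeping. Positive definiteness of $Q$ is then immediate: $Q_s=Q_s^T>0$, the added term $CMM^TC^T=(CM)(CM)^T\succeq 0$, and $\rho>0$, so $Q=Q^T>0$.

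It remains to convert this into the SPR claim. Because $P=P^T>0$ and $Q=Q^T>0$ solve the Lyapunov equation for $A_m+LC^T$ while the coupling $PB=CM$ is left completely untouched by the perturbation, I would invoke the sufficiency direction of the KYP Lemma (Appendix \ref{app:lin}) applied to the triple $(A_m+LC^T,\,B,\,CM)$ to conclude that $M^TC^T(sI-A_m-LC^T)^{-1}B$ is SPR. The only genuine subtlety I anticipate is verifying the standing hypotheses of that lemma for the perturbed realization: Hurwitzness of $A_m+LC^T$ is automatic from the existence of $P>0,Q>0$ in the Lyapunov equation, observability of $(A_m+LC^T,C^T)$ follows from Assumption \ref{mo:as:2} since observability is invariant under output injection, and the remaining minimality is inherited from the minimum-phase, relative-degree-one construction of Lemma \ref{mod:lem:l1}.

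As a consistency check that bypasses these hypotheses entirely, one can recompute the scalar transfer function directly. With $M=a$ a scalar, the perturbation $-\rho BM^TC^T=-\rho a\,BC^T$ acts as negative unity-type output feedback of gain $\rho a$ around the loop $C^T(sI-A_m-L_sC^T)^{-1}B=a/(s+\rho)$ from Lemma \ref{mod:lem:l1}, so that $M^TC^T(sI-A_m-LC^T)^{-1}B = a^2/\big(s+\rho(1+a^2)\big)$. This is a first-order function with positive gain and a pole at $-\rho(1+a^2)<0$, hence manifestly SPR, confirming the Lyapunov-based conclusion. I would nonetheless retain the Lyapunov formulation as the primary argument, since it is the form that generalizes to the square and non-square MIMO cases treated later.
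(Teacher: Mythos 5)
Your proof is correct and follows essentially the same route as the paper: the paper likewise adds $-\rho(PBM^TC^T+CMB^TP)$ to the first equation of \eqref{eq:ly2} and uses $PB=CM$ to collapse the cross term into $-2\rho\, CMM^TC^T$. Your explicit KYP-based justification of the SPR claim (and the scalar consistency check) simply spells out what the paper leaves implicit, so there is no substantive difference in approach.
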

\begin{proof}
Starting with the first equation in \eqref{eq:ly2} and adding the term $-\rho\left(PBM^TC^T+CMB^TP\right)$
on both sides of the inequality results in the following equality
\ben\begin{split}
(A_m+ LC^T)^TP&+ P(A_m+LC^T) =  \\&  -Q_s-\rho\left(PB M^T C^T+C M B^TP\right).
\end{split} \een
Using the second equality in \eqref{eq:ly2} the above equality simplifies to \eqref{eq:ly2p}
\end{proof}

%



\begin{thm}\label{thm1}
The closed-loop adaptive system specified by  \eqref{mod:eq:sys1}, \eqref{mod:eq:ref}, \eqref{eq:cin} and \eqref{mod:eq:update0}, satisfying assumptions 1 to \ref{mo:as:6}, with $L$ as in Lemma \ref{lem:proof1},  $M$ chosen as in \eqref{M}, and $\rho > \rho^*$  has globally bounded solutions with $\lim_{t\to\infty}e_y(t)=0$ with \be\label{rs} \rho^* = \frac{ \bar \lambda^2{\bar \theta}^{*2}}{2\lambda_{\text{min}}(Q_s)},\ee
where  $\bar\lambda$ and $\bar\theta^*$ are a priori known bounds defined in \eqref{kb}.
\end{thm}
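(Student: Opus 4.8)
The starting point is the derivative of the Lyapunov function $V$ in \eqref{disc:v00}, which the text has already reduced in \eqref{star} to $\dot V = -e^TQe + 2e^TPB\Lambda\Theta^{*T}e$. My plan is to exploit the specific structure of $Q$ and $PB$ supplied by Lemma \ref{lem:proof1} and Corollary \ref{corM}, namely $Q = Q_s + 2\rho\,CMM^TC^T$ and $PB = CM$, and then to complete the square so that the extra weighting of $Q$ along the $CC^T$ direction exactly dominates the sign-indefinite cross term for $\rho$ large enough. The remainder is a routine boundedness-plus-Barbalat argument.

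For the core estimate, I would substitute $PB = CM$ and the expression for $Q$ into \eqref{star} to get $\dot V = -e^TQ_se - 2\rho\,e^TCMM^TC^Te + 2e^TCM\Lambda\Theta^{*T}e$. Introducing the shorthand $z \triangleq M^TC^Te$, so that $e^TCMM^TC^Te = \norm{z}^2$ and $e^TCM\Lambda\Theta^{*T}e = z^T\Lambda\Theta^{*T}e$, the last two terms are quadratic in $z$ and complete to $-2\rho\norm{z}^2 + 2z^T\Lambda\Theta^{*T}e = -2\rho\norm{z - \tfrac{1}{2\rho}\Lambda\Theta^{*T}e}^2 + \tfrac{1}{2\rho}\norm{\Lambda\Theta^{*T}e}^2$. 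Dropping the nonpositive squared term and bounding $\norm{\Lambda\Theta^{*T}e} \le \bar\lambda\bar\theta^*\norm{e}$ via \eqref{kb} gives $\dot V \le -\big(\lambda_{\text{min}}(Q_s) - \tfrac{\bar\lambda^2\bar\theta^{*2}}{2\rho}\big)\norm{e}^2$. The threshold $\rho^*$ in \eqref{rs} is precisely where this coefficient changes sign, so $\rho > \rho^*$ yields $\dot V \le -c\norm{e}^2$ with $c>0$. This completion of the square, together with the observation that the indefinite term lives entirely in the $z = M^TC^Te$ subspace along which $Q$ has been over-weighted, is the crux of the argument and the step I expect to demand the most care.

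With $\dot V \le 0$ in hand, $V$ is nonincreasing; since $V$ in \eqref{disc:v00} is positive definite and radially unbounded in $(e,\tilde\Theta,\tilde K)$ (using $\Lambda>0$ from Assumption \ref{mo:as:5}), these signals are globally bounded, and hence $\Theta = \tilde\Theta+\Theta^*$ and $K = \tilde K+K^*$ are bounded. Because the reference model \eqref{mod:eq:ref} has Hurwitz $A_m$ and is driven by the bounded signals $r$ and $e_y = C^Te$, its state $x_m$ is bounded, so that $x = x_m+e$, the control $u$ in \eqref{eq:cin}, and $\dot e$ in \eqref{mod:eq:e} are all bounded. Finally, integrating $\dot V \le -c\norm{e}^2$ shows $e \in L_2$, and since $\dot e$ is bounded, $\norm{e}^2$ is uniformly continuous; Barbalat's lemma then gives $e(t)\to 0$, whence $e_y = C^Te \to 0$, which completes the claim.
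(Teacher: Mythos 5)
Your proposal is correct and is essentially the paper's own argument: completing the square in $z=M^TC^Te$ is exactly the Schur-complement step the paper performs on the block matrix $\mathcal Q(\rho)$ built from $e_y$ and $e$, and it produces the identical threshold $\rho^*$ in \eqref{rs}. The only differences are that you additionally (and validly) chase through the boundedness of $x_m$, $x$, and $u$ and conclude the stronger $e(t)\to 0$ rather than just $e_y(t)\to 0$, which relies on the implicit standing assumptions that $A_m$ is Hurwitz and $r$ is bounded.
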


\begin{proof}
We choose the lyapunov candidate \eqref{disc:v00} where $P$ is the solution to \eqref{eq:ly2} and satisfies \eqref{eq:ly2p}. Taking the time derivative of \eqref{disc:v00} along the system trajectories in \eqref{mod:eq:e},  and using the relations in \eqref{eq:ly2}, \eqref{eq:ly2p}, and \eqref{mod:eq:update0}, the following holds:
\be\label{mod:1}
\begin{split}
\dot V =& - e^T(Q+2 \rho  CMM^TC^T)e - 2e^TPB\Lambda \Theta^{*T} e \\ & + 2e^TPB \Lambda \tilde\Theta^T x_m + 2 \text{Tr} (\Lambda \tilde\Theta^T  x_m e_y^T M)  \\ &+ 2e^TPB \Lambda \tilde K^T r + 2 \text{Tr} (\Lambda \tilde K^T r e_y^T M)
\end{split}
\ee
Using the fact that $PB=CM$ from \eqref{eq:ly2} and the fact the Trace operator is invariant under cyclic permutations the inequality in \eqref{mod:1} can be rewritten as
\be\label{mod:2}
\begin{split}
\dot V =& - e^T(Q+2\rho CMM^TC^T)e - 2e^TCM\Lambda \Theta^{*T} e \\ & + 2e^TC M \Lambda \tilde\Theta^T x_m - 2 e_y^TM \Lambda \tilde\Theta^T  x_m  \\ &+ 2e^TCM \Lambda \tilde K^T r - 2  e_y^TM \Lambda \tilde K^T r 
\end{split}
\ee
Using the fact that $e_y = C^T e$, the 2nd and 3rd lines in the above equation equal zero. Therefore, \eqref{mod:2} can be written as $\dot V = -  \mathcal E^T \mathcal Q(\rho) \mathcal E$
where
\ben
 \mathcal Q(\rho)= \bb 2\rho MM^T &  M \Lambda \Theta^{*T} \\  \Theta^{*}\Lambda M^T & Q_s\eb \quad  
 \mathcal E= \bb e_y \\ e 
 \eb. \een
Given that  ${\rho>\rho^*>0}$, $ 2M\rho M^T-  M \Lambda\Theta^{*T} Q_s^{-1}\Theta^{*}\Lambda M^T>0$ by \eqref{rs} and $Q_s$ is posititve definite by design.   By Schur complement, $\mathcal Q(\rho)$ is positive definite. Therefore $\dot V \leq 0$ and thus $e_y,e,\tilde\Theta,\tilde K\in \mathcal L_\infty$. Furthermore, given that $M$ is positive definite ${e_y\in \mathcal L_2}$. Using Barbalat Lemma it follows that $\lim_{t\to\infty} e_y(t) = 0$.
\end{proof}

\begin{rem}
Theorem 1 implies that a controller as in \eqref{eq:cin} with the state replaced by the observer state $x_m$ will guarantee stability, thereby illustrating that the separation principle based adaptive control design can be satisfactorily deployed. It should be noted however that two key parameters $L$ and $M$ had to be suitably chosen. If $L = L_s$ then stability is not guaranteed. That is, simply satisfying an SPR condition is not sufficient for stability to hold. It is imperative that $Q$ be chosen as in \eqref{eq:ly2p}, i.e. be sufficiently positive
along the output direction $CC^T$ so as to contend with the sign indefinite term $2e^TPB \Lambda \Theta^{*T} e$ in $\dot V$. The result does not require that $L_s$ be chosen so that perfect pole zero cancellation occurs in Lemma \ref{mod:lem:l1}, all that is necessary is that the phase lag of $C^T(j\omega I-A_m-L_sC^T)^{-1}B$ never exceeds 90 degrees. Finally, it should be noted that any finite $\rho > \rho^*$ ensures stability.
\end{rem}


\subsection{Stability in the MIMO Case}\label{mod:sec:mimo} \label{sec:mimo1}
Stability in the MIMO case follows the same set of steps as in the SISO case. First, an $L_s$ and $M$  are defined such that the transfer function $M^TC^T(sI-A_m-L_sC^T)B$ is SPR. Then $L$ is defined such that the underlying adaptive system is stable. The following Lemmas mirror the results from Corollary \ref{corM} and Lemma \ref{lem:proof1}.

\begin{lem}\label{LEM2}
For the MIMO system in \eqref{mod:eq:sys1} satisfying Assumptions \ref{mo:as:1}--\ref{mo:as:3}  with $M$ chosen as in \eqref{M} there always exists an $L_s$ such that
$M^TC^T(sI-A_m-L_sC^T)^{-1}B$
is SPR.
\end{lem}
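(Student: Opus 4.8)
The plan is to lift the two-step SISO construction (Lemma \ref{mod:lem:l1} and Corollary \ref{corM}) to the matrix setting by selecting $L_s$ so that the closed-loop map from $u$ to $y$ becomes $\frac{1}{s+\rho}M$; premultiplying by $M^T$ then produces $\frac{1}{s+\rho}M^TM$. The key observation is that the first Markov parameter of the target transfer function is $\lim_{s\to\infty}sM^TC^T(sI-A_m-L_sC^T)^{-1}B=M^TC^TB=M^TM$, which is symmetric positive definite by Assumption \ref{mo:as:1} (as $M=C^TB$ is square and full rank). This is the matrix analog of the scalar fact $a^2>0$ exploited in Corollary \ref{corM}, and it is exactly what supplies strict positive realness at high frequency for a relative-degree-one transfer function. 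Thus, if the loop is arranged so that its transfer function is a scalar SPR function multiplying $M$, positive definiteness of $M^TM$ delivers the SPR property.

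First I would place $(A_m,B,C^T)$ in relative-degree-one normal form. Since $M=C^TB$ is invertible, there is a $W\in\Re^{n\times(n-m)}$ with $W^TB=0$ and $T\triangleq\bb C^T\\ W^T\eb$ nonsingular: if $Cv=Wu$ then $M^Tv=B^TCv=B^TWu=0$, forcing $v=0$ and then $u=0$. In the coordinates $\bb\xi\\\eta\eb=Tx$ one has $TB=\bb M\\ 0\eb$, $C^TT^{-1}=\bb I & 0\eb$, $y=\xi$, and
\be\label{lem2:nf}
\bb\dot\xi\\\dot\eta\eb=\bb A_{11}&A_{12}\\A_{21}&A_{22}\eb\bb\xi\\\eta\eb+\bb M\\0\eb u.
\ee
The zero-dynamics matrix $A_{22}$ carries the transmission zeros, so Assumption \ref{mo:as:3} renders $A_{22}$ Hurwitz; this is where minimum phase enters, mirroring the stable-zero step of Lemma \ref{mod:lem:l1}, while observability (Assumption \ref{mo:as:2}) guarantees there are no hidden modes obstructing this identification.

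Next I would exploit the injection freedom. In the new coordinates $TL_sC^TT^{-1}=\bb L_\xi & 0\\ L_\eta & 0\eb$, where $\bb L_\xi\\ L_\eta\eb=TL_s$. Taking $L_\eta=-A_{21}$ makes $\eta$ uncontrollable from $u$ (the matrix analog of cancelling the transmission zeros against the placed poles in the SISO argument), and taking $L_\xi=-A_{11}-\rho I$ places the surviving output loop at $-\rho$ for an arbitrary $\rho>0$. The closed-loop state matrix is then block triangular with eigenvalues $-\rho$ and those of $A_{22}$, hence Hurwitz, and the transfer function from $u$ to $y$ collapses to $\frac{1}{s+\rho}M$. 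Consequently the target is $G(s)\triangleq M^TC^T(sI-A_m-L_sC^T)^{-1}B=\frac{1}{s+\rho}M^TM$, for which $G(j\omega)+G(j\omega)^{*}=\frac{2\rho}{\omega^{2}+\rho^{2}}M^TM>0$ at all finite $\omega$ while the relative-degree-one behavior at infinity is secured by $M^TM>0$; hence $G$ is SPR. The gain is recovered by $L_s=T^{-1}\bb L_\xi\\ L_\eta\eb$, and observability (Assumption \ref{mo:as:2}) is precisely what permits output injection to freely assign the closed-loop spectrum, paralleling its role in Lemma \ref{mod:lem:l1}.

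The main obstacle is the MIMO nature of the zero cancellation: unlike the scalar case one cannot simply drop $n-1$ scalar poles onto $n-1$ scalar zeros, since the transmission zeros now carry directional structure. The normal form \eqref{lem2:nf} is the device that makes the step rigorous, reinterpreting pole--zero cancellation as decoupling the zero-dynamics subsystem from the input and identifying the spectrum of $A_{22}$ with the transmission zeros so that minimum phase yields a Hurwitz closed loop. The remaining work---checking the normal form, the block-triangular spectrum, and the symmetric-part computation displayed above---is routine once this decomposition is secured.
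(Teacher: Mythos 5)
Your proof is correct, but it takes a genuinely different route from the paper: the paper's entire proof of this lemma is a citation to the LQR-weighting-based algorithm of \cite{yu10}, together with a remark checking that algorithm's two hypotheses (minimum phase, and $M^TC^TB = M^TM = B^TCM > 0$), whereas you give a self-contained constructive argument via the relative-degree-one normal form. Your construction checks out: with $W$ chosen of full column rank and $W^TB=0$, the map $T=\bb C^T\\ W^T\eb$ is invertible precisely because $B^TC=M^T$ is; the Rosenbrock determinant in the new coordinates factors as $\pm\det(M)\det(sI-A_{22})$, so by the paper's own determinant-based Definition 3 the transmission zeros are the eigenvalues of $A_{22}$ and Assumption \ref{mo:as:3} makes $A_{22}$ Hurwitz; and the injection $L_\eta=-A_{21}$, $L_\xi=-A_{11}-\rho I$ collapses the target to $\frac{1}{s+\rho}M^TM$, which is SPR since $M^TM>0$. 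What your route buys is transparency and an exact matrix analog of Lemma \ref{mod:lem:l1} and Corollary \ref{corM}, with an explicit closed-loop transfer matrix; what the paper's citation buys is a numerical design procedure and freedom from the normal-form bookkeeping. One caveat worth a sentence in your write-up: your $L_s$ renders the zero dynamics unreachable, so the realization $(A_m+L_sC^T,\,B,\,CM)$ is non-minimal and the KYP lemma as stated in Appendix \ref{app:lin} (which assumes minimality) cannot be quoted verbatim to produce the $P,Q_s$ needed downstream in Lemma \ref{LEM3}; this is easily repaired, since $P=T^{T}\bb I & 0\\ 0 & \bar P_{22}\eb T$ with $A_{22}^T\bar P_{22}+\bar P_{22}A_{22}=-\bigl(I+\tfrac{1}{2\rho}A_{12}^TA_{12}\bigr)$ satisfies $PB=CM$ and gives $Q_s>0$ by a Schur-complement check, but it does not follow from the lemma statement alone.
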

\begin{proof}
An algorithm for the existence and selection of such an $L_s$ is given in \cite{yu10}.\end{proof}

\begin{rem}
 In order to apply the results from \cite{yu10}, the MIMO system of interest must be 1) minimum phase and 2) $M^TC^TB$ must be symmetric positive definite. By Assumption \ref{mo:as:3}, $C^T(sI-A)^{-1}B$ is minimum phase, and therefore $C^T(sI-A_m)^{-1}B$ is minimum phase as well. Also, given that $M$ is full rank, the transmission zeros of $C^T(sI-A_m)^{-1}B$ are equivalent to the transmission zeros of $M^TC^T(sI-A_m)^{-1}B$, see Lemma \ref{lem:inv} in Appendix \ref{app:lin}. Therefore, condition 1 of this remark is satisfied. We now move on to condition 2.
 
By Assumption \ref{mo:as:1} $C^TB$ is full rank, and by the definition of  $M$ in \eqref{M} it follows that $M^TC^TB=B^TCM>0$, which is a necessary condition for $MC^T(sI-A_m)^{-1}B$ to be SPR, see Corollary \ref{cor:spr} in Appendix \ref{app:lin}. A similar explicit construction of an $L_s$ such that $M^TC^T(sI-A_m-L_sC^T)^{-1}B$ is SPR can be found in \cite{huang}.
\end{rem}

\begin{lem}\label{LEM3}
Choosing $L=L_s-\rho BM^T$ where $L_s$ is defined in Lemma \ref{LEM2} and $\rho>0$ is arbitrary, the transfer function $M^TC^T(sI-A_m-LC^T)^{-1}B$ is SPR and satisfies:
\be
\begin{split}
\label{eq:ly3p}
(A_m+LC^T)^TP+P(A_m+LC^T)  = -Q \\
 Q   \triangleq Q_s  + 2\rho  C MM^TC^T \\
 PB=CM 
\end{split}
\ee
where $P=P^T>0$ and $Q_s=Q_s^T>0$ are independent of $\rho$ and $M$ is defined in \eqref{M}.
\end{lem}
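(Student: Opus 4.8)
The plan is to follow the SISO template of Lemma~\ref{lem:proof1} almost verbatim, the only genuine change being that the scalar high-frequency gain $a = C^TB$ is replaced by the mixing matrix $M = C^TB$ and scalar products become (non-commutative) matrix products. First I would fix a base Lyapunov triple that is independent of $\rho$. By Lemma~\ref{LEM2} the transfer function $M^TC^T(sI-A_m-L_sC^T)^{-1}B$ is SPR, and by Assumption~\ref{mo:as:1} together with the remark following Lemma~\ref{LEM2} its high-frequency gain $M^TC^TB = B^TCM$ is symmetric positive definite, so the realization has relative degree one. Applying the (forward direction of the) MIMO KYP Lemma from Appendix~\ref{app:lin} to this SPR realization then produces a symmetric $P>0$ and a symmetric $Q_s>0$ with
\be
(A_m+L_sC^T)^TP+P(A_m+L_sC^T)=-Q_s,\quad PB=CM .
\ee

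Second, I would perform the algebraic manipulation. Since $L = L_s - \rho BM^T$, one has $A_m+LC^T = (A_m+L_sC^T) - \rho BM^TC^T$. Substituting this into $(A_m+LC^T)^TP+P(A_m+LC^T)$ and expanding yields the base term $-Q_s$ plus the extra contribution $-\rho\left(CMB^TP+PBM^TC^T\right)$. The key simplification is the observer matching condition $PB=CM$ (equivalently $B^TP = M^TC^T$, using symmetry of $P$), which collapses each cross term to $CMM^TC^T$; adding them gives exactly $-Q = -(Q_s + 2\rho CMM^TC^T)$, which is \eqref{eq:ly3p}. This is precisely the computation in the proof of Lemma~\ref{lem:proof1}, now carried out with $M$ in place of the scalar $a$.

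Third, I would establish the SPR claim. Because $Q_s=Q_s^T>0$, the matrix $CMM^TC^T = (CM)(CM)^T$ is positive semidefinite, and $\rho>0$, the matrix $Q = Q_s + 2\rho CMM^TC^T$ is symmetric positive definite; together with $P=P^T>0$ this also guarantees that $A_m+LC^T$ is Hurwitz. Having $P>0$, $Q>0$ and $PB=CM$ in hand, the converse direction of the KYP Lemma returns that $M^TC^T(sI-A_m-LC^T)^{-1}B$ is SPR, completing the argument. I do not anticipate a conceptual obstacle beyond the SISO case; the one point requiring care is the bookkeeping in the non-commutative setting, namely verifying that the two cross terms $CMB^TP$ and $PBM^TC^T$ are transposes of one another and each reduces to $CMM^TC^T$ under $PB=CM$, and confirming that the relative-degree-one and positive-definite high-frequency-gain hypotheses needed to invoke both directions of the MIMO KYP Lemma are genuinely supplied by Assumption~\ref{mo:as:1} and the remark following Lemma~\ref{LEM2}.
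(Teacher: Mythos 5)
Your proposal is correct and follows essentially the same route as the paper: the paper's proof of this lemma simply refers back to the SISO argument of Lemma~\ref{lem:proof1}, i.e.\ take the KYP pair $(P,Q_s)$ from the SPR system of Lemma~\ref{LEM2}, add the term $-\rho\left(PBM^TC^T+CMB^TP\right)$ to both sides of the Lyapunov equation, and collapse it via $PB=CM$ to obtain $Q=Q_s+2\rho\,CMM^TC^T$. Your explicit verification of the non-commutative bookkeeping and of $Q>0$ is exactly the intended (omitted) content.
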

\begin{proof}
This follows the same steps as in the proof of Lemma \ref{lem:proof1}.\end{proof}

\begin{thm}
The closed-loop adaptive system specified by  \eqref{mod:eq:sys1}, \eqref{mod:eq:ref}, \eqref{eq:cin} and \eqref{mod:eq:update0}, satisfying assumptions 1 to \ref{mo:as:6}, with $L$ as in Lemma \ref{LEM3},  $M$ chosen as in \eqref{M}, and $\rho > \rho^*$  has globally bounded solutions with $\lim_{t\to\infty}e_y(t)=0$ where $\rho^*$ is defined in \eqref{rs}. \end{thm}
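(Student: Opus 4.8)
The plan is to mirror the SISO argument in Theorem~\ref{thm1} almost verbatim, since Lemmas~\ref{LEM2} and~\ref{LEM3} have already supplied the MIMO analogues of Corollary~\ref{corM} and Lemma~\ref{lem:proof1}. First I would adopt the Lyapunov candidate \eqref{disc:v00} with $P=P^T>0$ the matrix furnished by Lemma~\ref{LEM3}, which simultaneously satisfies $(A_m+LC^T)^TP+P(A_m+LC^T)=-Q$ with $Q=Q_s+2\rho CMM^TC^T$ and the compatibility condition $PB=CM$. Differentiating $V$ along the error dynamics \eqref{mod:eq:e} and substituting the adaptive laws \eqref{mod:eq:update0} together with these two relations yields exactly \eqref{mod:1}; this step is pure computation and carries over unchanged from the SISO proof.

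Next I would use $PB=CM$ to replace every occurrence of $PB$ by $CM$, obtaining \eqref{mod:2}, and then invoke $e_y=C^Te$ to cancel the two adaptive (trace) lines pairwise, since $e^TCM\Lambda\tilde\Theta^Tx_m=e_y^TM\Lambda\tilde\Theta^Tx_m$ and likewise for the $\tilde K$ terms. What remains is the quadratic form $\dot V=-\mathcal E^T\mathcal Q(\rho)\mathcal E$ with $\mathcal E=\begin{bmatrix}e_y\\ e\end{bmatrix}$ and
\be
\mathcal Q(\rho)=\begin{bmatrix} 2\rho MM^T & M\Lambda\Theta^{*T}\\ \Theta^*\Lambda M^T & Q_s\end{bmatrix}.
\ee
The key step is then the Schur-complement argument: $\mathcal Q(\rho)\succ0$ iff $Q_s\succ0$ (true by Lemma~\ref{LEM3}) and the complement $2\rho MM^T-M\Lambda\Theta^{*T}Q_s^{-1}\Theta^*\Lambda M^T\succ0$. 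Bounding $\norm{M\Lambda\Theta^{*T}Q_s^{-1}\Theta^*\Lambda M^T}\le \bar\lambda^2\bar\theta^{*2}\lambda_{\min}(Q_s)^{-1}\norm{MM^T}$ and using the a priori bounds \eqref{kb}, any $\rho>\rho^*$ with $\rho^*$ as in \eqref{rs} makes this complement positive definite, so $\mathcal Q(\rho)\succ0$ and hence $\dot V\le0$.

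From $\dot V\le0$ I conclude $e,e_y,\tilde\Theta,\tilde K\in\mathcal L_\infty$; integrating $\dot V$ and using positive definiteness of $\mathcal Q(\rho)$ (in particular of its $e_y$ block, since $M$ is nonsingular) gives $e_y\in\mathcal L_2$, and Barbalat's Lemma then delivers $\lim_{t\to\infty}e_y(t)=0$. The main obstacle, relative to the SISO case, is establishing that the scalar threshold \eqref{rs} still suffices in the matrix setting: one must verify that the relevant spectral bound is governed by $\lambda_{\min}(Q_s)$ and the induced-norm bounds $\bar\lambda,\bar\theta^*$ rather than by any direction-dependent interaction between $M\Lambda\Theta^{*T}$ and $Q_s$. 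I expect this to follow cleanly from the induced-2-norm submultiplicativity together with $\lambda_{\min}(Q_s)^{-1}=\norm{Q_s^{-1}}$, so that the same $\rho^*$ carries over without modification.
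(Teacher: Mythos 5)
Your proposal is correct and follows essentially the same route as the paper, whose own proof of this theorem is literally ``follows the same steps as in the proof of Theorem~\ref{thm1}''; you reproduce those steps with Lemma~\ref{LEM3} supplying the MIMO versions of the Lyapunov relations $PB=CM$ and $Q=Q_s+2\rho CMM^TC^T$. The one point worth tightening is the final Schur-complement bound: rather than the operator-norm estimate $\norm{M\Lambda\Theta^{*T}Q_s^{-1}\Theta^{*}\Lambda M^{T}}\le\bar\lambda^{2}\bar\theta^{*2}\lambda_{\min}(Q_s)^{-1}\norm{MM^{T}}$ (which by itself would force a condition-number penalty on $MM^{T}$), use the congruence $M\Lambda\Theta^{*T}Q_s^{-1}\Theta^{*}\Lambda M^{T}\preceq\norm{\Lambda\Theta^{*T}Q_s^{-1}\Theta^{*}\Lambda}\,MM^{T}\preceq\bigl(\bar\lambda^{2}\bar\theta^{*2}/\lambda_{\min}(Q_s)\bigr)MM^{T}$, which reduces the comparison to the scalar inequality $2\rho>\bar\lambda^{2}\bar\theta^{*2}/\lambda_{\min}(Q_s)$ and recovers exactly the $\rho^{*}$ of \eqref{rs}.
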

\begin{proof}
This follows the same steps as in the proof of Theorem \ref{thm1}.
\end{proof}

\section{Extensions}

In the previous section a method was presented for choosing $L$ in \eqref{mod:eq:ref} and $M$ in \eqref{mod:eq:update0} so that the overall adaptive system is  stable and $\lim_{t\to\infty} e(t) = 0$. For the SISO and MIMO cases the proposed method, thus far, is  a two step process. First a feedback gain and mixing matrix are chosen such that a specific transfer function is SPR. Then, the feedback gain in the first step is augmented with an additional feedback term of sufficient magnitude along the direction $BM^T$ so that stability of the underlying adaptive system can be guaranteed. 

In this section,  the method is extended to two different cases. In the first case, we apply this method to an LQG/LTR approach proposed in  \cite{lavbook} and show that asymptotic stability can be derived thereby extending the results of  \cite{lavbook}. In the second case, the method is extended to non-square plants.

\subsection{MIMO LQG/LTR} \label{sec:mimo2}
The authors in \cite{lavbook} suggested using an LQG approach for the selections of $L$ and $M$, motivated by the fact the underlying observer (which coincides with the closed-loop reference model as shown in \eqref{mod:eq:ref}) readily permits the use of such an approach and makes the design more in line with the classical optimal control approach.

In \cite{lavbook} the proposed method is only shown to be stable for finite $L$, where as in this section it is show that in fact $\lim_{t\to\infty} e(t) =0$. Furthermore, we note that the prescribed degree of stability as suggested in \cite[Equation 14.26]{lavbook} through the selection of $\eta$ is in fact not needed. The analysis below shows that stability is guaranteed due to sufficient weighting of the underlying $Q$ matrix along the $CC^T$ direction.

Let $L$ in \eqref{mod:eq:ref} be chosen as \cite{lavbook}
\be\label{mod:l22}
L= L_\nu \triangleq	-P_\nu C R_\nu^{-1}.
\ee
where $P_\nu$ is the solution to the Riccati Equation
\be\label{mod:11}
P_\nu A_m^T + A_mP_\nu -  P_\nu C R_\nu^{-1} C^T P_\nu + Q_\nu=0
\ee
where $Q_0=Q_0^T>0$ in $\Re^n$ and $R_0=R_0^T>0$ in $\Re^m$ and $\nu>0$, with 
$Q_\nu = Q_0+\left(1+\frac{1}{\nu}\right)BB^T$ and $ R_\nu = \frac{\nu}{\nu+1}R_0$.
Note that \eqref{mod:11} can also be represented as
\be\label{ric2}
A_\nu^T\tilde P_\nu  + \tilde P_\nu  A_\nu = -C R_\nu^{-1} C^T  - \tilde  Q_\nu
\ee
where
$A_\nu = A_m+L_\nu C^T$, $\tilde P_\nu = P_\nu^{-1} \text{ and } \tilde Q_\nu = \tilde P_\nu Q_\nu \tilde P_\nu.$ Given that our system is observable and $Q$ and $R$ are symmetric and positive definite, the Riccati equation has a solution $P_\nu$ for all fixed $\nu$. We are particularly interested in the limiting solution when $\nu$ tends to zero. The Riccati equation in \eqref{mod:11} is very similar to those studied in the LTR literature, with one very significant difference. In LTR methods the state weighting matrix is independent of $\nu$ where as in our application $Q_\nu$ tends to infinity for small $\nu$. 



\begin{lem}
If Assumptions 1 through 5 are satisfied then
$\lim_{\nu\to 0}\nu P_\nu =0$, $\lim_{\nu\to 0} P_\nu =P_0$ where $0<P_0^T=P_0<\infty$, and the following asymptotic relation holds
\be\label{Pvassymptotic}
P_\nu = P_0 + P_1 \nu + O(\nu^2).
\ee
Furthermore, there exists a unitary matrix $W\in \Re^{m \times m}$ such that
\be
\label{eq:W}
P_0C = B W^T \sqrt{R_0}, \quad \text{and} \quad 
\tilde P_0B  = C R_0^{-1/2}W
\ee
where $\tilde P_0=P_0^{-1}$ and $W=(UV)^T$ with $B^TCR_0^{-1/2} = U \Sigma V$. Finally, the inverse $\tilde P_\nu\triangleq P_\nu^{-1}$ is well defined in limit of small $\nu$ and 
\be
\tilde P_\nu = \tilde P_0 + \tilde P_1 \nu + O(\nu^2).
\ee
\end{lem}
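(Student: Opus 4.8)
The plan is to read \eqref{mod:11} as a singularly scaled Riccati equation and to separate a genuinely algebraic limit from a singular-perturbation part. Substituting $R_\nu^{-1}=\frac{\nu+1}{\nu}R_0^{-1}$ and $Q_\nu=Q_0+\frac{\nu+1}{\nu}BB^T$ into \eqref{mod:11} and multiplying by $\frac{\nu}{\nu+1}$ gives
\be\label{plan:scaled}
\tfrac{\nu}{\nu+1}\bigl(P_\nu A_m^T+A_mP_\nu+Q_0\bigr)-P_\nu C R_0^{-1}C^TP_\nu+BB^T=0.
\ee
The first task, and the crux of the lemma, is a $\nu$-uniform a priori bound: $P_\nu$ must stay bounded as $\nu\to0$ even though $Q_\nu\to\infty$. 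This is the cheap-control/loop-transfer-recovery regime, and Assumption~\ref{mo:as:3} (minimum phase) is exactly what forces $P_\nu$ to remain finite. I would obtain the upper bound by exhibiting a $\nu$-independent supersolution of \eqref{plan:scaled} and invoking the comparison/monotonicity property of the stabilizing Riccati solution; nonsingularity of the limit comes from observability (Assumption~\ref{mo:as:2}). Granting boundedness, $\lim_{\nu\to0}\nu P_\nu=0$ is immediate, and the first bracket in \eqref{plan:scaled} drops out in the limit, leaving the purely algebraic relation
\be\label{plan:alg}
P_0CR_0^{-1}C^TP_0=BB^T.
\ee

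The unitary matrix is then read off from \eqref{plan:alg} by linear algebra alone. Setting $N\triangleq P_0CR_0^{-1/2}$, \eqref{plan:alg} is $NN^T=BB^T$; by Assumption~\ref{mo:as:1} the matrix $C^TB$ is full rank, so $B$ has full column rank $m$, hence so does $N$, and matching the thin singular value decompositions of $N$ and $B$ produces an orthogonal $W\in\Re^{m\times m}$ with $N=BW^T$, i.e. the first identity $P_0C=BW^T\sqrt{R_0}$ of \eqref{eq:W}. To pin $W$ down, I would use symmetry of $P_0$: the matrix $R_0^{-1/2}C^TP_0CR_0^{-1/2}=R_0^{-1/2}C^T\bigl(P_0CR_0^{-1/2}\bigr)=WG$, with $G\triangleq B^TCR_0^{-1/2}$, is necessarily symmetric, and since $P_0>0$ and $G$ is nonsingular (again Assumption~\ref{mo:as:1}) it is in fact symmetric positive definite. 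Writing $G=W^T(WG)$ then exhibits a polar decomposition of $G$, so by uniqueness of the orthogonal polar factor $W^T=UV$, i.e. $W=(UV)^T$ for $B^TCR_0^{-1/2}=U\Sigma V$. Left-multiplying $N=BW^T$ by $\tilde P_0=P_0^{-1}$ and then right-multiplying by $W$ gives the second identity $\tilde P_0B=CR_0^{-1/2}W$ with no further work.

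It remains to upgrade boundedness to genuine convergence with the regular expansion \eqref{Pvassymptotic}. The subtlety is that \eqref{plan:alg} fixes only $P_0C$, i.e. $P_0$ on the column space of $C$, so the remaining $n-m$ directions of $P_0$ are not determined algebraically and the linearization of \eqref{plan:scaled} at the reduced solution is singular---the naive implicit function theorem fails. Here I would exploit the relative-degree-one structure implied by the full-rank $C^TB$ to split the state into a fast, recovered part and a slow part governed by the (stable, by Assumption~\ref{mo:as:3}) zero dynamics; the fast block carries the $O(\nu)$ scaling and forces integer powers of $\nu$, while the slow block converges through a reduced-order Lyapunov/Riccati equation, together fixing the remaining entries of $P_0$ and yielding $P_\nu=P_0+P_1\nu+O(\nu^2)$, with $P_1$ obtained by matching the $O(\nu)$ terms in \eqref{plan:scaled}. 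Finally, since $P_0$ is invertible, $\tilde P_\nu=P_\nu^{-1}$ is defined for all small $\nu$ and a Neumann-series expansion gives $\tilde P_\nu=\tilde P_0+\tilde P_1\nu+O(\nu^2)$ with $\tilde P_1=-\tilde P_0P_1\tilde P_0$.

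The main obstacle is this singular-perturbation step together with the a priori bound behind it: because $R_\nu\to0$ while $Q_\nu\to\infty$, neither continuity of Riccati solutions nor the standard loop-transfer-recovery results (which keep the state weight fixed) apply off the shelf, and the boundedness, the full convergence, and the regularity of the expansion all have to be earned from the minimum-phase Assumption~\ref{mo:as:3}. Everything downstream of \eqref{plan:alg}---the characterization of $W$ and the inversion for $\tilde P_\nu$---is then routine linear algebra and regular perturbation.
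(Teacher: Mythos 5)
Your route is essentially the one the paper relies on, with the difference that the paper does not actually write a proof: it delegates the boundedness and limit facts to the integral-cost argument of \cite{kwa:tac72} and the regular expansions $P_\nu=P_0+P_1\nu+O(\nu^2)$, $\tilde P_\nu=\tilde P_0+\tilde P_1\nu+O(\nu^2)$ to \cite[\S 13.3]{lavbook}. Within that shared framework your reconstruction is sound, and the part you carry out in full --- the limit of the rescaled Riccati equation giving $P_0CR_0^{-1}C^TP_0=BB^T$, the range argument producing an orthogonal $W$ with $P_0CR_0^{-1/2}=BW^T$, the symmetry of $R_0^{-1/2}C^TP_0CR_0^{-1/2}=WG$ forcing $W^T$ to be the orthogonal polar factor of $G=B^TCR_0^{-1/2}$ and hence $W=(UV)^T$, and the second identity of \eqref{eq:W} by inverting $P_0$ --- is correct and is exactly the derivation the paper leaves implicit. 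Be aware of two caveats. First, the Riccati equation \eqref{mod:11} is a filter-type equation, so the cost interpretation and the minimum-phase/controllability hypotheses are applied to the dual system $\{A_m^T,C,B^T\}$; this is harmless since transmission zeros are preserved under transposition, but it should be said. Second, the two genuinely analytic steps --- the $\nu$-uniform a priori bound on $P_\nu$ (your ``supersolution plus comparison'' is a plan, whereas \cite{kwa:tac72} bounds the optimal cost directly by a suboptimal minimum-phase-exploiting control) and the existence of a \emph{regular} integer-power expansion despite the singular linearization --- remain sketches in your write-up. That puts you at the same level of rigor as the paper itself, which cites rather than proves them, but if this lemma were to stand alone those are the steps that would need to be completed.
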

A full proof of this result is omitted to save space. The  following two facts, 1) $\lim_{\nu\to 0}\nu P_\nu =0$, and 2) $\lim_{\nu\to 0} P_\nu =P_0$ where $0<P_0^T=P_0<\infty$ follow by analyzing the integral cost 
\ben
x^T(0)P_\nu x(0) = \min \int_0^\infty x^T(\tau) Q_\nu x(\tau) + u^T(\tau) R_\nu u(\tau) \ d\tau
\een
in the same spirit as was done in \cite{kwa:tac72}. In order to apply the results from \cite{kwa:tac72} the system must be observable (Assumption 2), controllable (Assumptions 4 and 5), minimum phase (Assumption 3), and  $C^TB$ must be full rank (Assumption 1). For a detailed analysis of the asymptotic expansions $P_\nu = P_0 + P_1 \nu + O(\nu^2)$ and $\tilde P_\nu = \tilde P_0 + \tilde P_1 \nu + O(\nu^2)$ see \cite[\S 13.3, Theorem 13.2, Corollary 13.1]{lavbook}.

The update law for the adaptive parameters is then given as 
\be\label{mod:eq:update3}
\begin{split}
\dot\Theta = &- \Gamma_\theta x_m e_y^T  R_0^{-1/2}W
\\
\dot K = &-\Gamma_k r e_y^T R_0^{-1/2}W
\end{split}
\ee
where $W$ is defined just below \eqref{eq:W}.

\begin{thm}\label{thm:emt}The closed-loop adaptive system specified by  \eqref{mod:eq:sys1}, \eqref{mod:eq:ref}, \eqref{eq:cin} and \eqref{mod:eq:update3}, satisfying assumptions 1 to 6, with $L$ as in \eqref{mod:l22}, and $\nu$ sufficiently small has globally bounded solutions with $\lim_{t\to\infty}e_y(t)=0$.
\end{thm}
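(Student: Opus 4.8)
The plan is to reproduce the Lyapunov argument behind Theorem \ref{thm1}, with two substitutions: the role of the large constant weighting $\rho$ is taken over by the small parameter $\nu$, and the mixing matrix $M$ is replaced by the constant $R_0^{-1/2}W$ furnished by the preceding Lemma. The key design decision is to build the quadratic term of the Lyapunov function from the \emph{limiting} matrix $\tilde P_0 = P_0^{-1}$ rather than from $\tilde P_\nu$, precisely so that the exact identity $\tilde P_0 B = CR_0^{-1/2}W$ in \eqref{eq:W} becomes available. I would therefore take
\ben
V = e^T\tilde P_0 e + \text{Tr}(\Lambda\tilde\Theta^T\Gamma_\theta^{-1}\tilde\Theta) + \text{Tr}(\Lambda\tilde K^T\Gamma_k^{-1}\tilde K),
\een
which is positive definite and radially unbounded uniformly in $\nu$.

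First I would differentiate $V$ along \eqref{mod:eq:e} with $L = L_\nu$, so that $A_m + L_\nu C^T = A_\nu$, and substitute the adaptive laws \eqref{mod:eq:update3}. Writing $\phi \triangleq \tilde\Theta^T x_m + \tilde K^T r$ for the unknown regressor, the terms that carry $\phi$ are $2e^T\tilde P_0 B\Lambda\phi$ coming from the state-error dynamics and $-2e^T CR_0^{-1/2}W\Lambda\phi$ coming from the trace terms, after using $e_y = C^T e$ and cyclic invariance of the trace. Because $\tilde P_0 B = CR_0^{-1/2}W$ holds \emph{exactly}, these cancel identically, just as the second and third lines of \eqref{mod:2} did in the proof of Theorem \ref{thm1}. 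What remains is
\ben
\dot V = e^T\big(\tilde P_0 A_\nu + A_\nu^T\tilde P_0\big)e - 2e_y^T R_0^{-1/2}W\Lambda\Theta^{*T}e.
\een

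The heart of the argument is to show that $-(\tilde P_0 A_\nu + A_\nu^T\tilde P_0)$ is positive definite and, crucially, carries an $O(1/\nu)$ weighting along the $CC^T$ direction. I would decompose $\tilde P_0 A_\nu + A_\nu^T\tilde P_0 = (\tilde P_\nu A_\nu + A_\nu^T\tilde P_\nu) - \big[(\tilde P_\nu - \tilde P_0)A_\nu + A_\nu^T(\tilde P_\nu - \tilde P_0)\big]$ and evaluate the first bracket with the Riccati equation in its Lyapunov form \eqref{ric2}, which yields $-(CR_\nu^{-1}C^T + \tilde Q_\nu)$; since $R_\nu^{-1} = (1+1/\nu)R_0^{-1}$ this supplies the desired $O(1/\nu)$ term along $CC^T$ together with the fixed positive definite part $\tilde P_0 Q_0\tilde P_0$. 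The second bracket is the main obstacle: because $A_\nu = A_m - (1+1/\nu)P_\nu CR_0^{-1}C^T$ is $O(1/\nu)$ while $\tilde P_\nu - \tilde P_0 = \tilde P_1\nu + O(\nu^2)$ by the preceding Lemma, their product is only $O(1)$ and does not vanish with $\nu$. I expect the delicate step to be verifying, through these two asymptotic expansions, that this surviving $O(1)$ contribution is to leading order proportional to $CR_0^{-1}C^T$ -- hence enters $\dot V$ only through $e_y$ -- up to an $O(\nu)\|e\|^2$ remainder; tracking this $O(\nu)\cdot O(1/\nu)$ cancellation is where the expansions of the Lemma are indispensable.

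Collecting terms, $\dot V = -\mathcal E^T\mathcal Q_\nu\mathcal E$ with $\mathcal E = [e_y^T\ e^T]^T$, where the $e_y$-block of $\mathcal Q_\nu$ grows like $1/\nu$, the $e$-block is the fixed positive definite $\tilde P_0 Q_0\tilde P_0$, and both the sign-indefinite coupling $-2e_y^TR_0^{-1/2}W\Lambda\Theta^{*T}e$ and the $O(1)$ residual from the previous paragraph occupy the off-diagonal blocks. This is exactly the block structure of $\mathcal Q(\rho)$ in the proof of Theorem \ref{thm1}, so a Schur-complement argument shows $\mathcal Q_\nu > 0$ for every $\nu$ below some threshold $\nu^*$ (the analogue of $\rho > \rho^*$): the Schur complement of the large $e_y$-block is $\tilde P_0 Q_0\tilde P_0 - O(\nu) \succ 0$. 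Hence $\dot V \le -c_1\|e\|^2 - (c_2/\nu)\|e_y\|^2 \le 0$, which gives $e,\tilde\Theta,\tilde K \in \mathcal L_\infty$ and $e \in \mathcal L_2$. To finish, since $A_m$ is Hurwitz by design and $r, e_y \in \mathcal L_\infty$, the reference-model state $x_m$ is bounded, so $\phi$ and therefore $\dot e$ are bounded; $e$ is thus uniformly continuous, and Barbalat's Lemma yields $\lim_{t\to\infty} e(t) = 0$ and in particular $\lim_{t\to\infty} e_y(t) = 0$. Global boundedness of all solutions follows from the radial unboundedness of $V$ and the fact that every estimate above is independent of the initial condition.
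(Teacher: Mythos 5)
Your proposal is correct and follows essentially the same route as the paper's proof: the same Lyapunov function built from the limiting matrix $\tilde P_0$, the same exact cancellation via $\tilde P_0 B = CR_0^{-1/2}W$, the same decomposition of $\tilde P_0 A_\nu + A_\nu^T\tilde P_0$ through $\tilde P_\nu$ and the Riccati identity \eqref{ric2}, and the same Schur-complement argument on a block matrix whose $e_y$-block grows like $1/\nu$. The one step you flag as ``delicate'' --- that the $O(\nu)\cdot O(1/\nu)$ product $(\tilde P_\nu-\tilde P_0)A_\nu$ survives only as a term carrying a factor $C^T$, hence lands in the off-diagonal ($e_y$--$e$) block --- is exactly the computation the paper performs explicitly in \eqref{eq:ass123}, and your description of its outcome is accurate.
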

\begin{proof}
Consider the Lyapunov candidate $V=e^T\tilde P_0 e +  \text{Tr} (\Lambda \tilde\Theta^T \Gamma_\theta^{-1} \tilde\Theta) + \text{Tr} (\Lambda \tilde K^T \Gamma_k^{-1} \tilde K )$.
Taking the derivative along the system trajectories and substitution of the update laws in \eqref{mod:eq:update3} results in
\be\label{eq:LE}
\begin{split}
\dot V = &  e^TA_\nu^T\tilde P_0 e + e^T\tilde P_0 A_\nu e -2e^T\tilde P_0 B\Lambda \Theta^{*T} e 
 \\ & + 2e^T\tilde P_0 B \Lambda \tilde\Theta^T x_m + 2 \text{Tr} (\Lambda \tilde\Theta^T  x_m e_y^T R_0^{-1/2}W)  \\ &+ 2e^T\tilde P_0 B \Lambda \tilde K^T r + 2 \text{Tr} (\Lambda \tilde K^T r e_y^T R_o^{-1/2} W). 
\end{split}
\ee
The first step in the analysis of the above expression is to replace the elements $A_\nu^T\tilde P_0$ and $ \tilde P_0 A_\nu$ with bounds in terms of  $A_\nu^T\tilde P_\nu$ and $ \tilde P_\nu A_\nu$. First note that the following expansions hold in the limit of small $\nu$
\ben
\begin{split}
A_\nu^T\tilde P_\nu &=A_\nu^T\tilde P_0 + \nu A_\nu^T\tilde P_1+ O(\nu) \\
\tilde P_\nu A_\nu & = \tilde P_0 A_\nu + \nu \tilde P_1A_\nu+ O(\nu),
\end{split}
\een
where we have simply expanded the term $\tilde P_\nu$.
Expanding $A_\nu$ as $A_m -P_\nu C R_0^{-1} C^T \frac{\nu+1}{\nu} $ , the above relation simplifies to the following asymptotic relation as $\nu$ approaches 0,
\be\label{eq:ass123}
\begin{split}
A_\nu^T\tilde P_\nu &=A_\nu^T\tilde P_0   - CR_0^{-1}C^T  P_\nu \tilde P_1 + O(\nu)\\
\tilde P_\nu A_\nu& = \tilde P_0 A_\nu -  \tilde P_1  P_\nu C R_0^{-1} C^T+ O(\nu)
\end{split}
\ee
Substitution of \eqref{eq:ass123} for the expressions $A_\nu^T\tilde P_0$ and $ \tilde P_0 A_\nu$ in \eqref{eq:LE} results in the following inequality
\be
\begin{split}
\dot V \leq &  e^TA_\nu^T\tilde P_\nu e + e^T\tilde P_\nu A_\nu e -2e^T\tilde P_0 B\Lambda \Theta^{*T} e \\
&  + e^TCR_0^{-1}C^T  P_\nu \tilde P_1e +    e^T\tilde P_1 P_\nu C R_0^{-1} C^Te + O(\nu) e^Te 
 \\ & + 2e^T\tilde P_0 B \Lambda \tilde\Theta^T x_m + 2 \text{Tr} (\Lambda \tilde\Theta^T  x_m e_y^T R_0^{-1/2}W)  \\ &+ 2e^T\tilde P_0 B \Lambda \tilde K^T r + 2 \text{Tr} (\Lambda \tilde K^T r e_y^T R_o^{-1/2} W). 
\end{split}
\ee
%
Substitution of \eqref{ric2} in to the first line above,
and using the fact that $\tilde P_0 B = CR_0^{-1/2} W $ for the expressions in the bottom two lines,
\ben
\begin{split}
\dot V \leq & - e^T\tilde Q_\nu e  - \frac{\nu+1}{\nu} e_y^T R_0^{-1} e_y  + O(\nu) e^Te\\ & +e^TCR_0^{-1}C^T  P_\nu \tilde P_1e +    e^T\tilde P_1 P_\nu C R_0^{-1} C^Te\\ & - 2e^TC R_0^{-1/2}W\Lambda \Theta^{*T} e .
\end{split}
\een
Using the fact that $e_y=C^T e$ and $\nu+1\geq 1$, the following inequality holds for $\nu$ sufficiently small
\ben
\begin{split}
\dot V \leq & - e^T\tilde Q_\nu e  - \frac{1}{\nu} e_y^T R_0^{-1} e_y  + O(\nu) e^Te \\ & + e_y^T R_0^{-1} C^T P_\nu  \tilde P_1 e  +e^T   \tilde P_1   P_\nu C R_0^{-1} e_y\\ & 
- 2e_y^T R_0^{-1/2}W \Theta^{*T} e. 
\end{split}
\een
Expanding $P_\nu$ in the second line above
\be
\begin{split}
\dot V \leq & - e^T\tilde Q_\nu e  - \frac{1}{\nu} e_y^T R_0^{-1} e_y  + O(\nu) e^Te \\ & + e_y^T R_0^{-1} C^T P_0  \tilde P_1 e  +e^T   \tilde P_1  P_0 C R_0^{-1} e_y\\ & 
- 2e_y^T R_0^{-1/2}W \Theta^{*T} e. 
\end{split}
\ee
 
Let $P_\Theta \triangleq-R_0^{-1} C^T P_0  \tilde P_1 +  R_0^{-1/2}W \Theta^{*T} $, then the above inequality can be simplified as $\dot V \leq -  \mathcal E^T \mathcal Q(\nu) \mathcal E + O(\nu) e^Te$
where
\be
 \mathcal Q(\nu) = \bb \frac{1}{\nu}R_0^{-1} & P_\Theta \\ P_\Theta^T & \tilde Q_\nu \eb\text{ and } \mathcal E= \bb e_y \\ e \eb .
\ee
Note that $P_\Theta$ is independent of $\nu$ and  ${ \lim_{\nu \to 0} \tilde Q_\nu \geq \tilde P_0 Q_0 \tilde P_0>0}$. Thus for $\nu$ sufficiently small $ \frac{1}{\nu}R_0^{-1} - P_\Theta \tilde Q_\nu^{-1} P_\Theta^T>0$. Therefore $\mathcal Q(\nu)$ is positive definite and for $\nu$ sufficiently small $\mathcal Q(\nu)-O(\nu)I>0$ as well, where $I$ is the identity matrix. Thus the adaptive system is bounded for sufficiently small $\nu$. As before, it follows that $e_y\in \mathcal L_2$, and by Barbalat Lemma, $\lim_{t\to\infty}e_y(t)=0$.
\end{proof}
\begin{rem}
The same discussion for the SISO and MIMO cases is valid for the LQG/LTR based selection of $L$. Stability follows do to the fact that the Lyapunov candidate suitably includes the ``fast dynamics'' along the $e_y$ error dynamics. This fact is illustrated in \eqref{ric2} with the term $ C R_\nu^{-1} C^T$ appearing on the right hand, which when expanded in terms of $\nu$ takes the form $\frac{1+\nu}{\nu}C R_0^{-1} C^T$. By directly comparing $\frac{1+\nu}{\nu}C R_0^{-1} C^T$ to the term $2 \rho CMM^TC^T$ on the right hand side of \eqref{eq:ly2p}, increasing $\rho$ and decreasing $\nu$ have the same affect on the underlying Lyapunov equations. Thus, stability is guaranteed so long as $\rho$ is sufficiently large or equivalently, $\nu$ sufficiently small.
\end{rem}

\begin{rem}The stability analysis of this method was first presented in \cite{lavbook}.  This remark illustrates why the stability analysis presented in \cite{lavbook} resulted in $e(t)$ converging to a compact set for finite $\nu$. Consider the Lyapunov candidate from \cite[(14.43)]{lavbook} repeated here in \ben V=e^T\tilde P_\nu e +  \text{Tr} (\Lambda \tilde\Theta^T \Gamma^{-1} \tilde\Theta) + \text{Tr} (\Lambda \tilde K^T \Gamma^{-1} \tilde K ).\een
Taking the time derivative along the system trajectories
\ben
\begin{split}
\dot V =& - e^T\tilde Q_\nu e  - e^T C R_\nu^{-1} C^T e + 2e^T\tilde P_\nu B\Lambda \Theta^{*T} e \\ & + 2e^T\tilde P_\nu B \Lambda \tilde\Theta^T x_m + 2 \text{Tr} (\Lambda \tilde\Theta^T  x_m e_y^T R_0^{-1/2}W)  \\ &+ 2e^T\tilde P_\nu B \Lambda \tilde K^T r + 2 \text{Tr} (\Lambda \tilde K^T r e_y^T R_o^{-1/2} W) 
\end{split}
\een
which can be simplified to
\ben
\begin{split}
\dot V \leq& - e^T\tilde Q_\nu e  - e^T C R_\nu^{-1} C^T e + 2e^T\tilde P_\nu B\Lambda \Theta^{*T} e \\ & + O(\nu) \norm{e}\norm{x_m} + O(\nu) \norm{e}\norm r
\end{split}
\een
as $\nu \to 0$.
Note that $x_m$ is a function of $e$. Therefore, it is difficult to bound $x_m$ before the boundedness of $e$ is obtained. Furthermore, the presence of $r(t)$ on the righthand side will always perturb $V$ away from $0$ for all finite $\nu$. In Theorem \ref{thm:emt} we overcame this issue by selecting a slightly different Lyapunov function, $\tilde P_\nu$ was replaced by the limiting solution of $\tilde P_0$. It would appear to be a rather benign change to the Lyapunov candidate. This change however allows us to go from stability to the model following error converging to zero.
\end{rem}

\subsection{Extension to Non-square Systems}
Consider dynamics of the following
form
\begin{equation}
\dot{x}=Ax+B_{1}\Lambda u,\qquad y=C^{T}x\label{eq:non-sqaure system}
\end{equation}
where $x\in\Re^{n}$, $u\in\Re^{m}$, $y\in\Re^{p}$ and $p>m$. $B_{1}\in\Re^{n\times m}$
and $C\in\Re^{n\times p}$ are known. $A\in\Re^{n\times n}$ and $\Lambda\in\Re^{m\times m}$
are unknown. To address the non-square aspect Assumption 1 is replaced with the following:
\begin{assm}
Rank$(C)=p$ and  Rank$(C^TB_1)=m$.
\end{assm}
Again, the goal is to design a controller such that $x(t)$ follows the reference model:
\begin{equation}
\dot{x}_{m}=A_{m}x_{m}+B_{1}r-Le_y,\qquad y_m=C^{T}x_{m}\label{eq:ref system}
\end{equation}
where $C^{T}(sI-A_{m})^{-1}B_{1}$ represents the ideal behavior
responding to a command $r$. 

%

\begin{lem}
\label{lem:B} For a non-square system in the form of (\ref{eq:non-sqaure system})
and (\ref{eq:ref system}) that satisfies Assumptions 2, 3, and 7, there
exists a $B_{2}\in\mathbb{\Re}^{n\times(p-m)}$ such that the ``squared-up''
system $C^{T}(sI-A_{m})^{-1}B\label{eq:Wsu}$
is minimum phase, and $C^TB$ is full rank, where \be\label{Bdef} B=\bb B_{1} &  B_{2}\eb.\ee\end{lem}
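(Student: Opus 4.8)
The plan is to turn the statement into an eigenvalue–assignment problem for the zero dynamics of the squared–up system and then to verify that the assignment is feasible. First I would bring the realization to a normal form: since $\text{Rank}(C)=p$, a state coordinate change makes $C^T=\bb I_p & 0\eb$; writing $x=(x_1,x_2)$ with $x_1\in\Re^p$ and $x_2\in\Re^{n-p}$, the output is $y=x_1$ and I partition
\ben
A_m=\bb A_{11} & A_{12} \\ A_{21} & A_{22}\eb,\quad B_1=\bb B_{1a}\\ B_{1b}\eb,\quad B_2=\bb B_{2a}\\ B_{2b}\eb,
\een
where $C^TB_1=B_{1a}\in\Re^{p\times m}$ has full column rank by Assumption 7. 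Because $C^TB=\bb B_{1a} & B_{2a}\eb$ is a square $p\times p$ matrix that must be nonsingular, I would first pick $B_{2a}$ so that this matrix is invertible, which is always possible since $B_{1a}$ already supplies $m$ independent columns. This settles the full–rank requirement and leaves $B_{2b}$ as the genuine design freedom.

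Next I would characterize the transmission zeros of the squared–up (square, relative–degree–one) system $(A_m,B,C^T)$ as the eigenvalues of its zero–dynamics matrix. Enforcing $y\equiv 0$ and $\dot y\equiv 0$ on the output–nulling manifold gives the fictitious input $v=-(C^TB)^{-1}A_{12}x_2$ and the residual dynamics $\dot x_2=A_z x_2$ with
\ben
A_z=A_{22}-\bb B_{1b} & B_{2b}\eb(C^TB)^{-1}A_{12}=\bar A-B_{2b}N_2,
\een
where $N\triangleq(C^TB)^{-1}A_{12}=[N_1^T\ N_2^T]^T$ is partitioned conformably with the columns of $B_{1b},B_{2b}$ and $\bar A\triangleq A_{22}-B_{1b}N_1$. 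Thus $A_z$ is affine in the free block $B_{2b}$ and has exactly the output–injection form $\bar A-B_{2b}N_2$: by the pole–placement (observer) theorem, $B_{2b}$ can be chosen to render $A_z$ Hurwitz provided the pair $(\bar A,N_2)$ is detectable. Assumption 2 guarantees $(A_m,C^T)$ has no unobservable modes, so the transmission zeros of the squared–up system coincide with the eigenvalues of this $(n-p)\times(n-p)$ map, and stability of $A_z$ is equivalent to minimum phase.

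The crux — and the step I expect to be the main obstacle — is to show $(\bar A,N_2)$ is detectable, i.e. that the modes of $A_z$ that $B_{2b}$ cannot move are already stable. I would do this by identifying the unobservable subspace of $(\bar A,N_2)$ with the zero dynamics of the \emph{original tall} system. Writing $(C^TB)^{-1}=[G_1^T\ G_2^T]^T$ with $G_2$ the lower $(p-m)$ rows, one has $G_2B_{1a}=0$, so $G_2$ annihilates $B_{1a}$ and $N_2=G_2A_{12}$ satisfies $\ker N_2=\{x_2:A_{12}x_2\in\text{range}(B_{1a})\}$, which is precisely the output–nulling constraint set of $(A_m,B_1,C^T)$. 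On any $\bar A$–invariant subspace inside $\ker N_2$ one checks $A_{12}x_2=B_{1a}N_1x_2$, so there $\bar A$ restricts to $A_{22}-B_{1b}N_1$, exactly the tall system's zero–dynamics map; hence the fixed (unobservable) eigenvalues of $A_z$ are the invariant zeros of $(A_m,B_1,C^T)$. Since transmission zeros are invariant under state feedback through $B_1$ and $A_m$ is obtained from $A$ by such feedback, Assumption 3 places these zeros in the open left half–plane, so $(\bar A,N_2)$ is detectable. Choosing $B_{2b}$ to push the remaining, observable modes of $A_z$ into the open left half–plane then makes $A_z$ Hurwitz, so every transmission zero of $C^T(sI-A_m)^{-1}B$ is stable and the squared–up system is minimum phase, while $C^TB$ is nonsingular by construction. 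This yields the desired $B=\bb B_1 & B_2\eb$.
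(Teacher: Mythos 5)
Your proposal is correct in substance, but it is worth noting that the paper does not actually prove this lemma: its ``proof'' is a one-line deferral to the squaring-up reference \cite{qu13}. What you have written is, in effect, a self-contained reconstruction of that reference's construction, and the core of it is sound. The reduction is exactly right: after normalizing $C^T=[\,I_p\ \ 0\,]$, the free block $B_{2a}$ fixes invertibility of $C^TB$, and the transmission zeros of the squared-up relative-degree-one plant become the spectrum of $A_z=\bar A-B_{2b}N_2$, an output-injection pole-placement problem. The crux is, as you say, detectability of $(N_2,\bar A)$, and your identification $\ker G_2=\mathrm{range}(B_{1a})$ (from $G_2B_{1a}=0$ plus a dimension count), hence $\ker N_2=\{x_2: A_{12}x_2\in\mathrm{range}(B_{1a})\}$, correctly exhibits the unobservable subspace of $(N_2,\bar A)$ as the maximal output-nulling subspace of the tall plant and $\bar A$ restricted there as its zero dynamics; the unmovable eigenvalues are therefore the tall plant's zeros. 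Two caveats, both of which the paper itself glosses over: (i) the step ``$A_m$ is obtained from $A$ by state feedback through $B_1$, so its zeros are those of $(A,B_1,C^T)$'' tacitly uses the matching condition (Assumption 4 adapted to $B_1$), which is not among the hypotheses listed in the lemma; and (ii) the fixed eigenvalues you isolate are the \emph{invariant} zeros of $(A_m,B_1,C^T)$, which coincide with its transmission zeros (the objects governed by Assumption 3) only under controllability of $(A_m,B_1)$ --- an unstable uncontrollable mode lying in the output-nulling subspace would be unmovable yet not excluded by minimum phase. The cited squaring-up result makes controllability an explicit hypothesis, so your argument matches it once that assumption is stated.
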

\begin{proof}
The reader is referred to \cite{qu13} for further details.
\end{proof}

We now consider the squared-up plant $\{A_m,B,C^T\}$ and state the lemmas corresponding to Lemma \ref{LEM2} and Lemma \ref{LEM3}.

\begin{lem}\label{LEM2s}
For the MIMO system in \eqref{eq:non-sqaure system} satisfying Assumptions \ref{mo:as:2}, \ref{mo:as:3} and 7  with $M$ chosen as in \eqref{M} with $B$ as defined in \eqref{Bdef} there exists an $L_s$ such that
$M^TC^T(sI-A_m-L_sC^T)^{-1}B$
is SPR.
\end{lem}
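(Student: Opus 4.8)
The plan is to reduce the non-square problem to the square MIMO case already settled in Lemma \ref{LEM2}, using the squaring-up construction of Lemma \ref{lem:B} as the bridge. That lemma supplies a completion $B_{2}\in\Re^{n\times(p-m)}$ so that, with $B$ as in \eqref{Bdef}, the augmented triple $\{A_m,B,C^T\}$ is minimum phase and $C^TB$ is full rank. Since now both $C\in\Re^{n\times p}$ and $B\in\Re^{n\times p}$, the transfer function $C^T(sI-A_m)^{-1}B$ is a square $p\times p$ system, and I would treat this squared-up plant exactly as the square plant in Lemma \ref{LEM2}.

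Next I would verify that the squared-up triple meets the three structural hypotheses that Lemma \ref{LEM2} (via the algorithm of \cite{yu10}) consumes. Observability of $\{A_m,C^T\}$ is Assumption \ref{mo:as:2} and is untouched by the choice of $B_{2}$, since neither $A_m$ nor $C$ is altered by appending a column block to $B_{1}$. Minimum phase of $C^T(sI-A_m)^{-1}B$ and full rank of $C^TB$ are furnished directly by Lemma \ref{lem:B}. Finally, with $M=C^TB$ as in \eqref{M}, the high-frequency-gain condition reads $M^TC^TB=(C^TB)^T(C^TB)=B^TCC^TB$, which is symmetric and, because $C^TB$ is square and full rank, positive definite — precisely the condition required in the remark following Lemma \ref{LEM2}.

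With these facts in hand I would simply invoke Lemma \ref{LEM2} (equivalently, run the construction of \cite{yu10}) on the squared-up plant to obtain an $L_s$ rendering $M^TC^T(sI-A_m-L_sC^T)^{-1}B$ SPR, which is the claim. The only point requiring genuine care — and the real content of the argument — is confirming that the squaring-up transports all the square-case hypotheses without side effects: that appending $B_{2}$ leaves the observable pair $\{A_m,C^T\}$ and the output map $C^T$ intact, and that the minimum-phase and full-rank guarantees of Lemma \ref{lem:B} are exactly the minimum-phase and nonsingular-first-Markov-parameter conditions that Lemma \ref{LEM2} relies on. Once that bookkeeping is checked there is no further obstacle, since the SPR selection of $L_s$ is already established in the square setting and no new construction is needed here.
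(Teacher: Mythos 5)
Your proposal is correct and matches the paper's intent exactly: the paper leaves this lemma without an explicit proof, but the surrounding text (``We now consider the squared-up plant $\{A_m,B,C^T\}$ and state the lemmas corresponding to Lemma \ref{LEM2} and Lemma \ref{LEM3}'') makes clear that the argument is precisely your reduction --- square up via Lemma \ref{lem:B}, check that observability, minimum phase, and nonsingularity of $C^TB$ carry over, and invoke Lemma \ref{LEM2} (the construction of \cite{yu10}) on the resulting $p\times p$ plant. Your verification that $M^TC^TB=(C^TB)^T(C^TB)>0$ is the right bookkeeping step and nothing further is needed.
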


%

\begin{lem}\label{LEM4}
Choosing $L=L_s-\rho BM^T$ where $L_s$ is defined in Lemma \ref{LEM2} and $\rho>0$ is arbitrary, the transfer function $M^TC^T(sI-A_m-LC^T)^{-1}B$ is SPR and satisfies:
\be
\begin{split}
\label{eq:ly4p}
(A_m+LC^T)^TP+P(A_m+LC^T)  = -Q \\
 Q   \triangleq Q_s  + 2\rho  C MM^TC^T \\ 
 PB=CM
\end{split}
\ee
where $P=P^T>0$ and $Q_s=Q_s^T>0$ are independent of $\rho$ and $M$ is defined in \eqref{M}.
\end{lem}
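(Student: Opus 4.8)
The plan is to mirror the proof of Lemma \ref{lem:proof1}, now applied to the squared-up triple $\{A_m,B,C^T\}$ with $B$ as in \eqref{Bdef}. First I would invoke Lemma \ref{LEM2s}: because $M^TC^T(sI-A_m-L_sC^T)^{-1}B$ is SPR, the KYP Lemma supplies matrices $P=P^T>0$ and $Q_s=Q_s^T>0$, both independent of $\rho$, satisfying $(A_m+L_sC^T)^TP+P(A_m+L_sC^T)=-Q_s$ together with $PB=CM$. This is the exact analogue of \eqref{eq:ly2}--\eqref{eq:ly3p} for the non-square case, and it is the only place where the squaring-up construction enters.

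Next I would substitute $L=L_s-\rho BM^T$ into the Lyapunov operator. Since $A_m+LC^T=(A_m+L_sC^T)-\rho BM^TC^T$, expanding $(A_m+LC^T)^TP+P(A_m+LC^T)$ produces the extra term $-\rho\left(PBM^TC^T+CMB^TP\right)$ added to $-Q_s$. Using $PB=CM$ (and its transpose $B^TP=M^TC^T$) both summands collapse to $CMM^TC^T$, so the right-hand side becomes $-Q_s-2\rho CMM^TC^T=-Q$, which is precisely \eqref{eq:ly4p}; the identity $PB=CM$ persists because $P$ is unchanged.

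Finally, to recover the SPR claim I would run the KYP Lemma in the reverse direction: with $P=P^T>0$, $Q=Q_s+2\rho CMM^TC^T>0$ (a sum of a positive definite matrix and the positive semidefinite $2\rho CMM^TC^T$), and $PB=CM$, the transfer function $M^TC^T(sI-A_m-LC^T)^{-1}B$ is SPR.

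The algebra here is routine and identical to the square case, so I do not expect any genuine obstacle at this step; the real difficulties were dispatched upstream in Lemma \ref{lem:B} (existence of the squaring-up matrix $B_2$ rendering $C^T(sI-A_m)^{-1}B$ minimum phase with $C^TB$ full rank) and in Lemma \ref{LEM2s} (existence of an $L_s$ making the squared-up loop SPR). The single point warranting care is checking that the $P$ and $Q_s$ delivered through Lemma \ref{LEM2s} satisfy $PB=CM$ with the full squared-up $B=\bb B_1 & B_2\eb$, not merely with $B_1$; once that is in hand the argument closes verbatim as in Lemma \ref{lem:proof1}.
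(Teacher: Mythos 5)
Your proposal is correct and follows exactly the route the paper intends: the paper gives no separate proof for this lemma, but its analogues (Lemma \ref{LEM3} and Lemma \ref{lem:proof1}) proceed precisely as you do---KYP applied to the squared-up SPR loop from Lemma \ref{LEM2s} to get $P$, $Q_s$ with $PB=CM$, then adding $-\rho(PBM^TC^T+CMB^TP)$ and collapsing it via $PB=CM$. Your closing observation that $PB=CM$ must hold for the full squared-up $B=\bb B_1 & B_2\eb$ is the right thing to check, and it is automatic since Lemma \ref{LEM2s} asserts SPR of the transfer function built from that full $B$.
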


%

We should note that the $B$ matrix above corresponds to additional $p-m$ inputs which are fictitious. The following corollary helps in determining controllers that are implementable.


\begin{cor}\label{corl}
Choosing $L=L_s-\rho BM^T$ where $L_s$ is defined in Lemma \ref{LEM2s} and $\rho>0$ is arbitrary, the transfer function $M_1^TC^T(sI-A_m-LC^T)^{-1}B_1$ is SPR and $M_1$ is defined by the partition $M=[M_1\  M_2]$ which satisfies $P[B_1\ B_2]=C[M_1
\ M_2]$.\end{cor}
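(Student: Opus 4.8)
The plan is to observe that the $m\times m$ transfer function $H_1(s)=M_1^{T}C^{T}(sI-A_m-LC^{T})^{-1}B_1$ is obtained from the full squared-up transfer function of Lemma \ref{LEM4} simply by restricting to the first $m$ input channels and the corresponding $m$ output channels, so that it inherits the very same Lyapunov certificate $(P,Q)$. Writing $A_G:=A_m+LC^{T}$, I would first record the state-space realization of $H_1$ as the triple $(A_G,B_1,CM_1)$, whose output matrix is $M_1^{T}C^{T}$ and input matrix is $B_1$. Since the full transfer function is SPR by Lemma \ref{LEM4}, $A_G$ is Hurwitz; hence $H_1$ is analytic in the closed right half plane, which is the first requirement of SPR.

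The two KYP ingredients (see Appendix \ref{app:lin}) are then immediate. The Lyapunov equation $A_G^{T}P+PA_G=-Q$ with $Q=Q_s+2\rho CMM^{T}C^{T}>0$ from \eqref{eq:ly4p} involves only $A_G$ and $P$, neither of which is altered by restricting to the $B_1$ channels, so the same $P=P^{T}>0$ and $Q=Q^{T}>0$ serve as a certificate for $H_1$. For the gain-matching condition, I would partition $B=[B_1\ B_2]$ and $M=[M_1\ M_2]$ in the coupling identity $PB=CM$ of \eqref{eq:ly4p}; reading off the first block gives exactly $PB_1=CM_1$, that is $(M_1^{T}C^{T})^{T}=PB_1$, the KYP matching condition for the sub-realization. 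This simultaneously justifies the characterization of $M_1$ asserted in the statement, so no separate argument for it is needed.

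It then remains to check the relative-degree-one (high-frequency) condition, namely that the instantaneous gain $M_1^{T}C^{T}B_1=\lim_{s\to\infty}sH_1(s)$ is positive definite. Using $PB_1=CM_1$ one has $M_1^{T}C^{T}B_1=B_1^{T}PB_1$, and since $\mathrm{Rank}(C^{T}B_1)=m$ forces $B_1$ to have full column rank (Assumption 7), this quantity is positive definite. With analyticity in the closed right half plane, the pair $(P,Q)$, the matching $PB_1=CM_1$, and the positive-definite instantaneous gain all established, the KYP Lemma yields $H_1(j\omega)+H_1(j\omega)^{*}>0$ for every finite $\omega$, which together with the high-frequency condition gives that $H_1$ is SPR.

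I expect the only genuine subtlety to lie in the hypotheses under which the KYP Lemma is invoked: the sub-realization $(A_G,B_1,CM_1)$ need not be minimal even when the squared-up realization is, so I would be careful to rely only on the \emph{sufficient} direction of the Lemma. Concretely, substituting $M_1^{T}C^{T}=B_1^{T}P$ and the Lyapunov relation yields the algebraic identity $H_1(j\omega)+H_1(j\omega)^{*}=B_1^{T}(-j\omega I-A_G^{T})^{-1}Q(j\omega I-A_G)^{-1}B_1$, which is strictly positive for all finite $\omega$ directly from $Q>0$ and the full column rank of $B_1$, with no appeal to controllability or observability of the restricted triple.
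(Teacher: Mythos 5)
Your argument is correct. Note that the paper states Corollary \ref{corl} without any proof, so there is nothing to compare against line by line; your write-up is the natural completion, and it uses exactly the machinery the paper sets up: partition the certificate $PB=CM$ of \eqref{eq:ly4p} into $PB_1=CM_1$, keep the same Lyapunov pair $(P,Q)$ for $A_m+LC^T$, and verify positivity of $H_1(j\omega)+H_1(j\omega)^{*}$ and of the instantaneous gain $M_1^TC^TB_1=B_1^TPB_1$. Your closing remark is the right one to make: the restricted triple need not be minimal, so Lemma \ref{lem:aky} cannot be cited as an equivalence, and your direct identity $H_1(j\omega)+H_1(j\omega)^{*}=B_1^T(-j\omega I-A_G^T)^{-1}Q(j\omega I-A_G)^{-1}B_1$ is the clean way around it (it also supplies the uniform low bound $\omega^2\bigl(H_1(j\omega)+H_1(j\omega)^{*}\bigr)\to B_1^TQB_1>0$ needed to pass from positivity on the finite $j\omega$-axis to SPR in the sense of the paper's definition). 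An even shorter route, worth knowing, is to write $H_1(s)=E^TH(s)E$ with $E=[I_m\;\;0]^T$, where $H(s)=M^TC^T(sI-A_m-LC^T)^{-1}B$ is the full squared-up transfer matrix of Lemma \ref{LEM4}: positive realness of $H(s-\epsilon)$ is preserved under the congruence $Z\mapsto E^TZE$ for any constant real $E$, so SPR of the principal $m\times m$ block follows in one line with no KYP or minimality discussion at all. Your version buys the explicit quantitative certificate; the congruence version buys brevity.
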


%
Accordingly, we propose the following adaptive law:
\begin{equation}
\begin{array}{c}
\dot{\Theta}=-\Gamma_{\theta}x_{m}e_{y}^{T}M_{1}\\
\dot{K}=-\Gamma_{k}re_{y}^{T}M_{1}
\end{array}\label{eq:Ada Law}
\end{equation}
The following theorem shows that the overall system is globally stable and $\lim_{t\to\infty} e(t)=0$.
\begin{thm}
The closed-loop adaptive system specified by \eqref{eq:non-sqaure system},
\eqref{eq:ref system}, \eqref{eq:cin} and \eqref{eq:Ada Law}, satisfying
assumptions 2 to 7, with $B$ chosen as in \eqref{Bdef}, $L$ as in Lemma \ref{LEM4}, $M$
chosen as in Equation \eqref{M}, with $M_{1}$ defined in Corollary \ref{corl}, and $\rho>\rho^{*}$ has globally bounded
solutions with $\lim_{t\to\infty}e_y(t)=0$, where $\rho^{*}$ is defined as
\begin{equation}
\rho^{*}=\frac{\bar{\lambda}^{2}\bar{\theta}^{*2}\left\Vert M_{1}\right\Vert ^{2}}{2\lambda_{min}(Q_{s})\lambda_{min}(MM^T)}.
\end{equation}
\end{thm}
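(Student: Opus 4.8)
The plan is to reproduce the proof of Theorem \ref{thm1} almost line for line, the only substantive differences being that the physical input matrix is now $B_1$ rather than $B$ and that the update law, and hence the cross terms in $\dot V$, are weighted by the submatrix $M_1$ rather than the full mixing matrix $M$. First I would establish the state-error dynamics. Writing $e=x-x_m$, substituting the controller \eqref{eq:cin} into \eqref{eq:non-sqaure system} and \eqref{eq:ref system}, and invoking the matching conditions of Assumption \ref{mo:as:4} with $B$ replaced by the true input $B_1$ (so that $A+B_1\Lambda\Theta^{*T}=A_m$ and $\Lambda K^{*T}=I$), a routine computation gives $\dot e=(A_m+LC^T)e+B_1\Lambda(\tilde\Theta^Tx_m+\tilde K^Tr-\Theta^{*T}e)$ with $e_y=C^Te$. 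This is identical to \eqref{mod:eq:e} with $B\mapsto B_1$.

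Next I would take the Lyapunov candidate \eqref{disc:v00}, now with $P$ the solution of the Lyapunov equation furnished by Lemma \ref{LEM4}, differentiate it along these dynamics, and substitute the adaptive law \eqref{eq:Ada Law}. Using the first identity of \eqref{eq:ly4p} for the quadratic-in-$e$ term, and the partitioned relation $PB_1=CM_1$ from Corollary \ref{corl} to eliminate $PB_1$, the derivative reduces to $\dot V=-e^TQ_se-2\rho\,e_y^TMM^Te_y-2e_y^TM_1\Lambda\Theta^{*T}e$ together with the two pairs of parameter cross terms in $\tilde\Theta$ and $\tilde K$. The essential cancellation is exactly as in the square case: since $e^TC=e_y^T$ and the trace is invariant under cyclic permutation, each term $2e^TCM_1\Lambda\tilde\Theta^Tx_m$ is annihilated by the corresponding $-2\,\mathrm{Tr}(\Lambda\tilde\Theta^Tx_me_y^TM_1)$, and likewise for the $\tilde K$ terms. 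It is precisely because Corollary \ref{corl} delivers $PB_1=CM_1$ with the same $M_1$ that appears in \eqref{eq:Ada Law} that the fictitious columns $B_2$ never enter the implementable law and the cross terms vanish.

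I would then assemble the surviving terms into $\dot V=-\mathcal E^T\mathcal Q(\rho)\mathcal E$, where $\mathcal E=[\,e_y^T\ \ e^T\,]^T$ and $\mathcal Q(\rho)=\left[\begin{smallmatrix} 2\rho MM^T & M_1\Lambda\Theta^{*T}\\ \Theta^{*}\Lambda M_1^T & Q_s\end{smallmatrix}\right]$, using $e_y=C^Te$ to rewrite $e^TCMM^TC^Te$ as $e_y^TMM^Te_y$. Since $M=C^TB$ is square and full rank by Lemma \ref{lem:B}, the corner block satisfies $MM^T>0$, while $Q_s>0$ by Lemma \ref{LEM4}. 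Positive definiteness of $\mathcal Q(\rho)$ then follows from the Schur complement as soon as $2\rho MM^T-M_1\Lambda\Theta^{*T}Q_s^{-1}\Theta^{*}\Lambda M_1^T>0$. Bounding the indefinite block from above by $\|M_1\|^2\bar\lambda^2\bar\theta^{*2}/\lambda_{\min}(Q_s)$ and the corner block from below by $2\rho\,\lambda_{\min}(MM^T)$ produces exactly the stated threshold $\rho^*=\bar\lambda^2\bar\theta^{*2}\|M_1\|^2/(2\lambda_{\min}(Q_s)\lambda_{\min}(MM^T))$.

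The only genuine departure from Theorem \ref{thm1}, and the step demanding the most care, is this Schur-complement estimate: the $(1,1)$ block carries the full mixing matrix $M$ (inherited from $Q=Q_s+2\rho CMM^TC^T$ in \eqref{eq:ly4p}), whereas the off-diagonal block carries only $M_1$ (inherited from the implementable law \eqref{eq:Ada Law}). Keeping track of this asymmetry is what replaces the scalar bound \eqref{rs} by one involving both $\|M_1\|^2$ in the numerator and $\lambda_{\min}(MM^T)$ in the denominator, and it is also why verifying $MM^T>0$ through the full-rank squaring-up of Lemma \ref{lem:B} is needed. Once $\mathcal Q(\rho)>0$ for every $\rho>\rho^*$, the remainder is routine: $\dot V\le 0$ yields $e,e_y,\tilde\Theta,\tilde K\in\mathcal L_\infty$; the bound $\dot V\le-\lambda_{\min}(\mathcal Q(\rho))\|e_y\|^2$ gives $e_y\in\mathcal L_2$; and Barbalat's Lemma delivers $\lim_{t\to\infty}e_y(t)=0$.
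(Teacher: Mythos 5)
Your proposal is correct and follows essentially the same route the paper intends: the paper's own proof is just the one-line remark that it ``follows as in that of Theorem \ref{thm1}'', and your detailed working --- the error dynamics with $B_1$, the cancellation via $PB_1=CM_1$ from Corollary \ref{corl}, and the Schur-complement bound whose $M$/$M_1$ asymmetry produces the modified $\rho^*$ --- is exactly the adaptation that remark presupposes.
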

\begin{IEEEproof}
The proof follows as in that of Theorem \ref{thm1}.
%
\end{IEEEproof}

%
%
%
%
%

\section{Simulation Study}\label{sec:example}

For the simulation study we compare the performance of a combined linear and adaptive LQG controller to an LQR controller, which is full states accessible by definition.  The uncertain system to be controlled is defined as  
\ben
\dot x_p = A_p x_p + B_p u \quad \text{and} \quad y_p=C_y^T x_p
\een
where $x_p=\bb V & \alpha & q & \theta \eb^T$ is the state vector for the plant consisting of: velocity in ft/s, angle of attack in radians, pitch rate in radians per second, and pitch angle in radians. The control input consists of $u= \bb T & \delta \eb^T$, the throttle position percentage and elevator position in degrees. The measured outputs are   $y_p= \bb V & q & h\eb^T$ where $h$ is height measured in feet. We note that two of the states for this example are not available for measurement, the angle of attack and the pitch angle. The pitch angle is never directly measurable and is always reconstructed from the pitch rate through some filtering process. The angle of attack however is usually available for direct measurement in most classes of aircraft. There are several classes of vehicles however where this information is hard to obtain directly: weapons, munitions, small aircraft, hypersonic vehicles, and very flexible aircraft, just to name a few.

 In this example we intend to control the altitude of the aircraft, and for this reason an integral error is augmented to the plant. The extended state plant is thus defined as
\ben
\dot x = A x +B_1u+ B_z r \quad \text{and} \quad  y = C^Tx
\een
where $y_z=h$, $r$ is the desired altitude, 
\ben\begin{split}
x&=\bb x_p \\ \int(y-r)  \eb, \quad A = \bb A_p & 0_{4\times 1} \\ C_z & 0_{1\times 1}  \eb, \quad B_1=\bb B_p \\ 0_{1\times 2} \eb, \\ B_z &= \bb 0_{4\times 1} \\ -I_{1\times 1} \eb,\quad  C^T=\bb C^T & 0_{3\times 1} \\ 0_{1\times 4} & I _{1 \times 1}\eb, \quad y= \bb  y_p \\ \int (y_z-r) \eb
\end{split}
\een

The reference system is defined as
\ben
\dot x_m = A_m x_m  + B_z r - L_\nu (y-y_m) \quad \text{and} \quad  y_m = C^Tx_m
\een
where  $A_m = A_{nom}+ B_1 K^T_{R}$, with $K_R^T=-R_R^{-1}B_pP_R$  the  solution to the algebraic Riccati equation
\ben
A_{nom}^TP_R + P_R A_{nom} -P_R B R_R^{-1} B^T P_R + Q_R=0
\een
and 
\ben
A_{nom} = \bb A_{p,nom} & 0_{4\times 1} \\ C_z & 0_{1 \times 1}  
\eb.\een 
The closed-loop reference model gain $L_\nu$ is defined as in \eqref{mod:l22} where we have squared up the input matrix  through the artificial selection of a matrix $B_2$ and defined $B=[B_1 \ B_2]$ so that $C^TB$ is square, full rank, and $C^T(sI-A_m)^{-1}B$ is minimum phase.
The control input for the linear and adaptive LQG controller is defined as
\ben
u=K_R^T x_m + \Theta^T x_m
\een
where the update law for the adaptive parameters is defined as 
\ben
\dot \Theta = - \Gamma x_m e_{y}^T M_1,
\een
with $M_1$ the first $m$ colums of $R_0^{-1/2}W$ where $W$ is defined just below \eqref{eq:W} . The LQR controller is defined as 
\ben
u=K_R^T x.
\een
All simulation and design parameters are given in Appendix \ref{app:example}. Note that the free design parameter $\Gamma$ has zero for the last entry, this is due to the fact that for an uncertainty in $A_p$ feedback from the integral error state is not needed for a matching condition to exist. The simulation results are now presented.

Figure 1 contains the trajectories of the state space for the adaptive controller (black), linear controller (gray), reference model  $x_m$ (black dotted), and reference command height (gray dashed). The reference command in height was chosen to be a filtered step, as can be seen by the gray dashed line. The plant when controlled only by the full state linear optimal controller is unable to maintain stability as can be seen by the diverging trajectories. The reference model trajectories are only visibly different from the plant state trajectories under adaptive control in the angle of attack subplot and the pitch angle subplot, the two states which are not measurable. Figure 2 contains the control input trajectories for the adaptive controller and Figure 3 contains the adaptive control parameters. There are two points to take away form the simulation example. First, the adaptive output feedback controller is able to stabilize the system while the full state accessible linear controller is not. Second, the state trajectories, control input, and adaptive parameters exhibit smooth trajectories. This smooth behavior is rigorously justified in \cite{gib13access} for a simpler class of closed-loop reference models.

\begin{figure}[h!]
\centering
\includegraphics[width=3.3in]{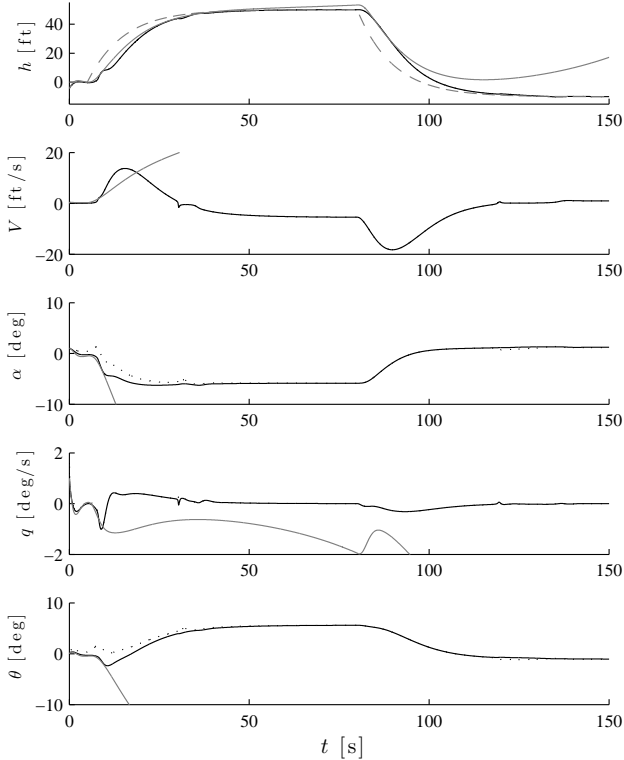}
\caption{Trajectories in state space from the adaptive controller (black), linear LQR controller (gray), reference model  $x_m$ (black dotted), reference command for height (gray dashed).}\label{fig:1}
\end{figure}

\begin{figure}[h!]
\centering
\includegraphics[width=3.3in]{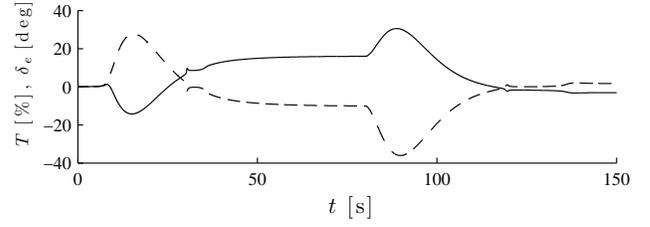}
\caption{Control inputs from the adaptive controller, throttle percentage (dashed) and elevator position (solid).}\label{fig:2}
\end{figure}

\begin{figure}[h!]
\centering
\includegraphics[width=3.3in]{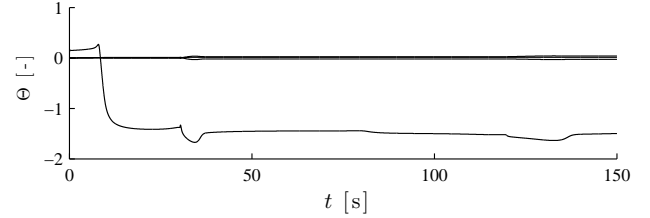}
\caption{Adaptive Parameters.}\label{fig:3}
\end{figure}

\section{Conclusions}
This note presents methods for designing output feedback adaptive controllers for plants that satisfy a states accessible matching condition, thus recovering a separation like principle for this class of adaptive systems, similar to linear plants.
\bibliographystyle{IEEEtran}

\begin{thebibliography}{10}
\providecommand{\url}[1]{#1}
\csname url@samestyle\endcsname
\providecommand{\newblock}{\relax}
\providecommand{\bibinfo}[2]{#2}
\providecommand{\BIBentrySTDinterwordspacing}{\spaceskip=0pt\relax}
\providecommand{\BIBentryALTinterwordstretchfactor}{4}
\providecommand{\BIBentryALTinterwordspacing}{\spaceskip=\fontdimen2\font plus
\BIBentryALTinterwordstretchfactor\fontdimen3\font minus
  \fontdimen4\font\relax}
\providecommand{\BIBforeignlanguage}[2]{{%
\expandafter\ifx\csname l@#1\endcsname\relax
\typeout{** WARNING: IEEEtran.bst: No hyphenation pattern has been}%
\typeout{** loaded for the language `#1'. Using the pattern for}%
\typeout{** the default language instead.}%
\else
\language=\csname l@#1\endcsname
\fi
#2}}
\providecommand{\BIBdecl}{\relax}
\BIBdecl

\bibitem{annbook}
K.~S. Narendra and A.~M. Annaswamy, \emph{Stable Adaptive Systems}.\hskip 1em
  plus 0.5em minus 0.4em\relax Dover, 2005.

\bibitem{ioabook}
P.~Ioannou and J.~Sun, \emph{Robust Adaptive Control}.\hskip 1em plus 0.5em
  minus 0.4em\relax Dover, 2013.

\bibitem{kkkbook}
M.~Krstic, I.~Kanellakopoulos, and P.~Kokotovic, \emph{Nonlinear and Adaptive
  Control Design}.\hskip 1em plus 0.5em minus 0.4em\relax John Wiley and Sons,
  1995.

\bibitem{gib13access}
T.~E. Gibson, A.~M. Annaswamy, and E.~Lavretsky, ``On adaptive control with
  closed-loop reference models: Transients, oscillations, and peaking,''
  \emph{IEEE Access}, vol.~1, pp. 703--717, 2013.

\bibitem{gib13acc1}
------, ``{Closed--loop Reference Model Adaptive Control, Part I: Transient
  Performance},'' in \emph{American Control Conference}, 2013.

\bibitem{gib13ecc}
------, ``Closed-loop reference models for output--feedback adaptive systems,''
  in \emph{European Control Conference}, 2013.

\bibitem{gib13acc2}
------, ``{Closed--loop Reference Model Adaptive Control: Composite control and
  Observer Feedback},'' in \emph{11th IFAC International Workshop on Adaptation
  and Learning in Control and Signal Processing}, 2013.

\bibitem{eug10aiaa}
E.~Lavretsky, ``Adaptive output feedback design using asymptotic properties of
  lqg/ltr controllers,'' in \emph{AIAA 2010--7538}, 2010.

\bibitem{lav12tac}
------, ``Adaptive output feedback design using asymptotic properties of
  lqg/ltr controllers,'' \emph{IEEE Trans. Automat. Contr.}, vol.~57, no.~6,
  2012.

\bibitem{lavbook}
E.~Lavretsky and K.~A. Wise, \emph{Robust and Adaptive Control: With Aerospace
  Applications}.\hskip 1em plus 0.5em minus 0.4em\relax Springer, 2013.

\bibitem{gibson_phd}
T.~E. Gibson, ``Closed-loop reference model adaptive control: with application
  to very flexible aircraft,'' Ph.D. dissertation, Massachusetts Institute of
  Technology, 2014.

\bibitem{kha96}
H.~K. Khalil, ``Adaptive output feedback control of nonlinear systems
  represented by input-output models,'' \emph{Automatic Control, IEEE
  Transactions on}, vol.~41, no.~2, pp. 177--188, 1996.

\bibitem{ata01}
A.~N. Atassi and H.~Khalil, ``A separation principle for the control of a class
  of nonlinear systems,'' \emph{Automatic Control, IEEE Transactions on},
  vol.~46, no.~5, pp. 742--746, 2001.

\bibitem{qu13aiaa}
Z.~Qu, E.~Lavretsky, and A.~M. Annaswamy, ``An adaptive controller for very
  flexible aircraft,'' in \emph{AIAA Guidance Navigation and Control
  Conference}, 2013.

\bibitem{dpw2015}
D.~P. Wiese, A.~M. Annaswamy, J.~A. Muse, M.~A. Bolender, and E.~Lavretsky,
  ``Adaptive output feedback based on closed-loop reference models for
  hypersonic vehicles,'' in \emph{AIAA Guidance Navigation and Control
  Conference}, 2015.

\bibitem{max15}
Z.~Qu and A.~M. Annaswamy, ``Adaptive output-feedback control and its
  application on very-flexible aircraft,'' \emph{submitted to AIAA Journal of
  Guidance Control and Navigation}, 2015.

\bibitem{gib13tranA}
T.~E. Gibson, A.~M. Annaswamy, and E.~Lavretsky, ``Closed--loop reference model
  adaptive control: Stability, performance and robustness,''
  \emph{ArXiv:1201.4897}, 2012.

\bibitem{yu10}
J.~te~Yu, M.-L. Chiang, and L.-C. Fu, ``Synthesis of static output feedback spr
  systems via lqr weighting matrix design,'' in \emph{IEEE Conference on
  Decision and Control}, 2010.

\bibitem{huang}
C.~H. Huang, P.~A. Ioannou, J.~Maroulas, and M.~G. Safonov, ``Design of
  strictly positive real systems using constant output feedback,''
  \emph{Automatic Control, IEEE Transactions on}, vol.~44, no.~3, pp. 569--573,
  Mar 1999.

\bibitem{kwa:tac72}
H.~Kwakernaak and R.~Sivan, ``The maximally achievable accuracy of linear
  optimal regulators and linear optimal filters,'' \emph{IEEE Trans. Automat.
  Contr.}, vol.~17, no.~1, pp. 79--86, Feb. 1972.

\bibitem{qu13}
Z.~Qu, D.~Wiese, A.~M. Annaswamy, and E.~Lavretsky, ``Squaring-up method in the
  presence of transmission zeros,'' in \emph{19th World Congress, The
  International Federation of Automatic Control}, 2014.

\bibitem{akylemma}
B.~Anderson, ``A system theory criterion for positive real matrices,''
  \emph{SIAM Journal on Control}, vol.~5, no.~2, pp. 171--182, 1967.

\bibitem{kwabook}
H.~Kwakernaak and R.~Sivan, \emph{Linear optimal control systems}.\hskip 1em
  plus 0.5em minus 0.4em\relax Wiley Interscience, 1972.

\end{thebibliography}

\appendices
\section{The SPR condition, KYP Lemma and Transmission Zeros}\label{app:lin}

This section contains relevant definitions for linear systems that were assumed to be familiar to the reader. They have been included for completeness. We begin with two definitions of positive realness. The KYP Lemma is then introduced. The section closes with a few rank conditions related to transfer matrices.

\begin{defn}[{\cite{akylemma,annbook}}]
An $n\times n$ matrix $Z(s)$ of complex variable $s$ is {\em Positive Real} if
\begin{enumerate}
\item $Z(s)$ is analytic when $\text{Re}(s)>0$ (Re $\triangleq$ real part)
\item $Z^*(s)=Z(s^*)$ when $\text{Re}(s)>0$ ($^*$ denotes complex conjugation)
\item $Z^T(s^*)+Z(s)$ is positive semidefinite for $\text{Re}(s)>0$. 
\end{enumerate}
\end{defn}
\begin{defn}
An $n\times n$ matrix $Z(s)$ of complex variable $s$ is {\em Strictly Positive Real} (SPR) if $Z(s-\epsilon)$ is positive real for some $\epsilon>0$
\end{defn}
Throughout the remainder of this section the following transfer matrix is referred to
\be\label{app:lin1}
Z(s)  = C^T(sI-A)^{-1}B.
\ee

\begin{lem}[Kalman Yakubovich Popov (KYP), {\cite[Lemma 2.5]{annbook}}]\label{lem:aky}
A $Z(s)$ as defined in \eqref{app:lin1} that is minimal is SPR iff there exists $P=P^T>0$ and $Q=Q^T>0$ s.t.
$A^TP+PA=-Q$ and $PB=C$.
\end{lem}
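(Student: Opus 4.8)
The plan is to prove the two implications separately: the reverse direction (existence of $P,Q$ $\Rightarrow$ SPR) is a direct frequency-domain verification, while the forward direction (SPR $\Rightarrow$ existence) I would reduce to the classical positive real lemma.

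For sufficiency, assume $P=P^T>0$ and $Q=Q^T>0$ satisfy $A^TP+PA=-Q$ and $PB=C$. The Lyapunov equation with $P>0$ already forces $A$ to be Hurwitz, so $Z$ is analytic in the closed right half plane. Substituting $C^T=B^TP$ gives $Z(s)=B^TP(sI-A)^{-1}B$, and with $\Phi\triangleq(j\omega I-A)^{-1}$ I would compute the Hermitian part on the imaginary axis. Using the identity $(-j\omega I-A^T)P+P(j\omega I-A)=-(A^TP+PA)=Q$, a one-line manipulation yields
\begin{equation*}
Z(j\omega)+Z^T(-j\omega)=B^T\!\left(P\Phi+\Phi^*P\right)\!B=(\Phi B)^*Q(\Phi B)>0
\end{equation*}
for every finite $\omega$, the strict inequality following from $Q>0$ together with $\Phi B\neq 0$ (controllability from minimality). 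To promote positive realness to \emph{strict} positive realness I would use the margin supplied by $Q>0$: for small $\epsilon>0$ the matrix $A+\epsilon I$ is still Hurwitz, $PB=C$ is untouched, and $(A+\epsilon I)^TP+P(A+\epsilon I)=-(Q-2\epsilon P)$ with $Q-2\epsilon P>0$, so repeating the computation for the shifted realization shows $Z(s-\epsilon)$ is positive real, i.e. $Z$ is SPR.

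For necessity I would exploit the definition of SPR to offload the hard work onto the ordinary positive real lemma. If $Z$ is SPR then $Z(s-\epsilon)$ is positive real for some $\epsilon>0$; analyticity in the right half plane forces the eigenvalues of $A$ to have real part below $-\epsilon$, so $A+\epsilon I$ is Hurwitz and $\{A+\epsilon I,B,C\}$ is a minimal realization of $Z(s-\epsilon)$ with zero feedthrough. Applying Anderson's positive real lemma to this realization produces $\tilde P=\tilde P^T>0$ and $\tilde Q=\tilde Q^T\geq 0$ with $(A+\epsilon I)^T\tilde P+\tilde P(A+\epsilon I)=-\tilde Q$ and $\tilde PB=C$. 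Setting $P\triangleq\tilde P$ and $Q\triangleq\tilde Q+2\epsilon\tilde P$ then gives $A^TP+PA=-Q$ and $PB=C$, and crucially $Q=\tilde Q+2\epsilon\tilde P\geq 2\epsilon\tilde P>0$, so the positive semidefiniteness delivered by the positive real lemma is upgraded to strict definiteness purely by the $\epsilon$-shift.

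The main obstacle is the invocation of the positive real lemma in the necessity direction, whose own proof rests on spectral factorization: one forms the para-Hermitian matrix $\Pi(s)=Z(s)+Z^T(-s)$, which SPR makes positive on the imaginary axis, and factors it as $\Pi(s)=\Theta^T(-s)\Theta(s)$ with $\Theta$ stable in order to read off $P$ and a factor $L$ with $Q=LL^T$. Establishing this factorization for minimal multivariable realizations, and verifying that it delivers $PB=C$ exactly (rather than $PB=C-LW_0$, which collapses to $PB=C$ only because the feedthrough $D=0$ forces $W_0=0$), is the delicate technical step. Since the statement is quoted verbatim as \cite[Lemma 2.5]{annbook}, I would present the sufficiency computation in full and cite that reference for the spectral-factorization content underlying necessity.
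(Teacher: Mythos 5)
The paper does not prove this lemma at all: it is stated in Appendix~\ref{app:lin} purely as a quoted classical result, with the proof delegated entirely to the citation \cite[Lemma 2.5]{annbook}. Your proposal therefore supplies strictly more than the paper does, and the route you take is the standard one: sufficiency by the direct Hermitian-part computation $Z(j\omega)+Z^T(-j\omega)=(\Phi B)^*Q(\Phi B)$ together with the $\epsilon$-shift of the realization, and necessity by applying the non-strict positive real lemma to the shifted system $\{A+\epsilon I,B,C\}$ and absorbing $2\epsilon\tilde P$ into $Q$ to upgrade $\tilde Q\geq 0$ to $Q>0$. Both directions are sound in outline. Two small imprecisions are worth fixing. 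First, $\Phi B\neq 0$ does not by itself make the matrix $(\Phi B)^*Q(\Phi B)$ positive \emph{definite}; that requires $\Phi B$ to have full column rank, i.e.\ $B$ of full column rank (which controllability alone does not guarantee). Since the paper's Definition~1 of positive realness only demands positive \emph{semi}definiteness of $Z^T(s^*)+Z(s)$, this does not break your sufficiency argument, but the claimed strict inequality should either be weakened or justified by noting that $\mathrm{rank}\,B=m$ is forced whenever the square $Z$ is to be SPR. Second, in the necessity direction the eigenvalues of $A$ are only guaranteed to satisfy $\mathrm{Re}\,\lambda\leq-\epsilon$, so $A+\epsilon I$ may have imaginary-axis eigenvalues; you should apply Anderson's lemma to $Z(s-\epsilon')$ for some $0<\epsilon'<\epsilon$ (for which $A+\epsilon'I$ is strictly Hurwitz and $Z(s-\epsilon')$ is still positive real), which leaves the rest of your argument unchanged. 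With those repairs the proof is complete modulo the spectral-factorization content of the positive real lemma, which you correctly identify as the part to cite.
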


\begin{cor}\label{cor:spr}
If $B\in \Re^{n \times m},\ m\leq n$ is rank $m$ and $Z(s)$ is SPR, then $C^TB = (C^TB)^T >0$. \end{cor}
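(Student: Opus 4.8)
The plan is to convert the frequency-domain hypothesis ``$Z(s)$ is SPR'' into an algebraic identity via the KYP Lemma and then read off both claimed properties of $C^TB$ from a single line of linear algebra. The point is that attacking the problem directly in the frequency domain is a dead end: evaluating the positive-real inequality $Z^T(s^*)+Z(s)\succeq 0$ along the real axis and letting $s\to\infty$ only recovers that the \emph{symmetric part} $C^TB+(C^TB)^T$ is positive semidefinite, which is strictly weaker than the asserted symmetry and strict definiteness. The KYP characterization is what lets me upgrade this.

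First I would apply Lemma~\ref{lem:aky} to $Z(s)=C^T(sI-A)^{-1}B$: since $Z(s)$ is SPR, there exist $P=P^T>0$ and $Q=Q^T>0$ with $A^TP+PA=-Q$ and $PB=C$. I only need the second relation. Substituting $C=PB$ into the first Markov parameter gives $C^TB=(PB)^TB=B^TP^TB=B^TPB$, where the last step uses $P=P^T$. Symmetry of $C^TB$ is then immediate, since $(B^TPB)^T=B^TP^TB=B^TPB$. For strict positivity, I take any nonzero $v\in\Re^m$; because $B$ has full column rank $m$ we have $Bv\neq 0$, so $v^T(C^TB)v=(Bv)^TP(Bv)>0$ by $P>0$. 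Hence $C^TB=(C^TB)^T>0$, as claimed.

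The calculation itself is routine, so the only points requiring care are structural rather than computational. The first is that Lemma~\ref{lem:aky} is stated for a minimal realization; I would invoke KYP under that standing hypothesis, i.e. take $\{A,B,C\}$ minimal, so that the relation $PB=C$ holds for the very $B$ appearing in the corollary. The second, which I regard as the genuine crux, is the role of the rank hypothesis: the identity $C^TB=B^TPB$ only guarantees positive \emph{semi}definiteness in general, and it is precisely $\operatorname{rank}(B)=m$ that promotes $B v\neq 0$ for all $v\neq 0$ and thereby yields the strict inequality. Thus the full-column-rank assumption on $B$ is not decorative but is exactly what separates ``positive semidefinite'' from the desired ``positive definite.''
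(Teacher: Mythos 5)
Your proof is correct and follows essentially the same route as the paper: both invoke the KYP relation $PB=C$ to write $C^TB=B^TPB$, then read off symmetry from $P=P^T$ and strict positive definiteness from $P>0$ together with $\operatorname{rank}(B)=m$. Your version simply spells out the quadratic-form argument that the paper's one-line proof leaves implicit.
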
 
\begin{proof} Given that $PB=C$, it also follows that $B^TP=C^T$ and thus $B^TPB=C^TB$ is symmetric, rank $m$ and positive.\end{proof}

%

\begin{defn}
For $Z(s)$ as defined in \eqref{app:lin1} that is minimal and square, the {\em transmission zeros} are the zeros of the polynomial $\psi (s)   =  \det (sI- A) \det [C^T(sI-A)^{-1}B]$ \cite[Theorem 1.19]{kwabook}.
\end{defn}

\begin{lem}\label{lem:inv}
For $G\in \Re^{m\times m}$ and full rank, the location of the transmission zeros for a square $Z(s)$ in \eqref{app:lin1} are equivalent to the location of the transmission zeros of $G Z(s)$.
\end{lem}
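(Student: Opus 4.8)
The plan is to exploit the fact that left-multiplication by a constant invertible matrix $G$ alters only the output matrix of a state-space realization, combined with multiplicativity of the determinant. Writing $Z(s)=C^T(sI-A)^{-1}B$, I observe that $GZ(s)=(GC^T)(sI-A)^{-1}B$, so $GZ(s)$ admits the realization $\{A,\,B,\,CG^T\}$ with exactly the same $A$ and $B$ but with $C^T$ replaced by $GC^T$. This is the crux: the ``pole part'' $\det(sI-A)$ is untouched, and only the numerator determinant is rescaled.

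With this realization, the defining polynomial for the transmission zeros of $GZ(s)$ is
\[
\tilde\psi(s) = \det(sI-A)\,\det\!\big[GC^T(sI-A)^{-1}B\big].
\]
Since $G$ and $C^T(sI-A)^{-1}B$ are both $m\times m$, multiplicativity of the determinant gives $\det[GC^T(sI-A)^{-1}B]=\det(G)\,\det[C^T(sI-A)^{-1}B]$, whence $\tilde\psi(s)=\det(G)\,\psi(s)$, where $\psi$ is the transmission-zero polynomial of $Z(s)$. Because $G$ is full rank, $\det(G)$ is a nonzero constant, so $\tilde\psi$ and $\psi$ share identical root sets, and the transmission zeros coincide.

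The one point requiring care — and essentially the only place the argument could break — is that the transmission-zero polynomial is defined only for a \emph{minimal} realization, so I must verify that $\{A,\,B,\,CG^T\}$ is minimal whenever $\{A,\,B,\,C\}$ is. Controllability of $(A,B)$ is inherited verbatim, since neither $A$ nor $B$ changes. For observability, the observability matrix of $(CG^T,A)$ is obtained from that of $(C,A)$ by left-multiplying each block $C^TA^k$ by the invertible matrix $G$; this amounts to a nonsingular block-diagonal transformation and hence preserves rank. Thus minimality is retained, $\tilde\psi$ is legitimately defined, and the determinant identity closes the argument. I anticipate no genuine obstacle here: the lemma is fundamentally the statement that invariant (transmission) zeros are unchanged under a static invertible postcompensator.
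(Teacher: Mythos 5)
Your proof is correct and follows essentially the same route as the paper's: factor $\det[GC^T(sI-A)^{-1}B]=\det(G)\det[C^T(sI-A)^{-1}B]$ and use $\det(G)\neq 0$ to conclude the zero sets coincide. Your added check that the realization $\{A,B,CG^T\}$ remains minimal is a small tidiness bonus the paper omits, but it does not change the substance of the argument.
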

\begin{proof}
If ${s_0\in \Ce}$ is a transmission zero, then
$\det (s_0I- A) \det [GC^T(s_0 I-A)^{-1}B] =0$,
and recalling the product rule for determinates $ \det [GC^T(s_0I-A)^{-1}B] =  \det (G) \det [C^T(s_0I-A)^{-1}B]$. $G$ is full rank and thus $\det (G)\neq 0$. Therefore, $s_0$ is a solution to $\det (s_0I- A)  \det [C^T(s_0 I-A)^{-1}B]=0$ as well. \end{proof}

\section{Parameters for Section \ref{sec:example}}\label{app:example}
The plant parameters are given as:
\ben
\begin{split}
A_{p,nom} &= \bb -0.038 &18.94& 0& -32.174 \\
   -0.001 & -0.632 & 1 & 0 \\
   0 & -0.759 & -0.518 & 0 \\
   0 & 0 & 1 &  0 \eb \\
B_p &=\bb10.1 &0\\ 
    0& -0.0086\\
    0.025& -0.011\\ 
    0 &0\eb \\
C_y & = \bb1& 0& 0& 0 \\
    0 &0 &1 &0 \\
    0 &-250& 0& 250\eb \\
C_z & = \bb 0 &-250 & 0 & 250 \eb \\
  A_p&= A_{p,nom} + B_p\bb -2 &1.5& 2& -2 \\ 1.5 &-2& 2& 1 \eb  \end{split}
\een
The linear control design parameters:
\ben\begin{split}
Q_R&= \text{diag}(\bb1& 1& .1& 0 & .1\eb)\\
R_R&=  \text{diag}(\bb1& 10\eb)\\
\end{split}\een
where $K_R^T=-R_R^{-1}B_pP_R$ with $P_R$ the solution to the control Riccati equation.

The adaptive control design
\ben\begin{split}
Q_0&=I_{(n+q) \times (n+q) }\\
R_0&=I_{(p+q) \times (p+q) }\\
\Gamma&= \text{diag}(\bb 1& 1& 1& 1& 0\eb)\\
\nu&=0.01\\
B_2& = \bb0 &0\\
      0 &1\\
      3 &0\\
      0 &3\\
      1 &0\eb.
\end{split}\een

\end{document}